\newtheorem{theorem}{\bf Theorem}[section] \newtheorem{definition}{\bf Definition}[section]
 \newtheorem{corollary}{\bf Corollary}[section] \newtheorem{proposition}{\bf Proposition}[section] 
\newtheorem{assumption}{\bf Assumption}[section]
\newcommand\rmc{\mathrm{c}}
\newcommand\rmd{\mathrm{d}}
\newcommand\rml{\mathrm{l}}
\newcommand\rmu{\mathrm{u}}
\newcommand\rmx{\mathrm{x}}
\newcommand\rmV{\mathrm{V}}
\newcommand\bbi{\mathbb{I}}
\newcommand\bbr{\mathbb{R}}
\newcommand\bbu{\mathbb{U}}
\newcommand\bby{\mathbb{Y}}
\newcommand\calK{\mathcal{K}}
\title{\LARGE \bf
Stability in data-driven MPC:
an inherent robustness perspective
}
\author{
Julian Berberich$^1$, Johannes K\"ohler$^2$, Matthias A. M\"uller$^3$, Frank Allg\"ower$^1$
\thanks{
F. Allgöwer is thankful that his work was funded by Deutsche Forschungsgemeinschaft (DFG, German Research Foundation) under Germany's Excellence Strategy - EXC 2075 - 390740016 and under grant 468094890. 
		F. Allgöwer acknowledges the support by the Stuttgart Center for Simulation Science (SimTech).
		M. A. Müller is thankful that his work was funded by the European Research Council (ERC) under the European Union’s Horizon 2020 research and innovation programme (grant agreement No 948679).
		J. Berberich thanks the International Max Planck Research School for Intelligent Systems (IMPRS-IS)
for supporting him.}
\thanks{$^1$University of Stuttgart, Institute for Systems Theory and Automatic Control, 70550 Stuttgart, Germany (email:$\{$julian.berberich, frank.allgower$\}$@ist.uni-stuttgart.de)}
\thanks{$^2$Institute for Dynamical Systems and Control, ETH Zurich, ZH-8092, Switzerland (email:jkoehle@ethz.ch)}
\thanks{$^3$Leibniz University Hannover, Institute of Automatic Control, 30167 Hannover, Germany (e-mail:mueller@irt.uni-hannover.de)}}
\begin{document}
\IEEEpubid{\begin{minipage}{\textwidth}\ \\[12pt] \\ \\
\copyright 2022 IEEE. This version has been accepted for publication in Proc. IEEE Conference on Decision and Control (CDC), 2022. Personal use of this material is permitted. Permission from IEEE must be obtained for all other uses, in any current or future media, including reprinting/republishing this material for advertising or promotional purposes, creating new collective works, for resale or redistribution to servers or lists, or reuse of any copyrighted component of this work in other works.
\end{minipage}}

\maketitle

\begin{abstract}
Data-driven model predictive control (DD-MPC) based on Willems' Fundamental Lemma has received much attention in recent years, allowing to control systems directly based on an implicit data-dependent system description.
The literature contains many successful practical applications as well as theoretical results on closed-loop stability and robustness.
In this paper, we provide a tutorial introduction to DD-MPC for unknown linear time-invariant (LTI) systems with focus on (robust) closed-loop stability.
We first address the scenario of noise-free data, for which we present a DD-MPC scheme with terminal equality constraints and derive closed-loop properties.
In case of noisy data, we introduce a simple yet powerful approach to analyze robust stability of DD-MPC by combining continuity of DD-MPC w.r.t.\ noise with inherent robustness of model-based MPC, i.e., robustness of nominal MPC w.r.t.\ small disturbances.
Moreover, we discuss how the presented proof technique allows to show closed-loop stability of a variety of DD-MPC schemes with noisy data, as long as the corresponding model-based MPC is inherently robust.
\end{abstract}


\section{Introduction}
Willems' Fundamental Lemma~\cite{willems2005note} is a foundational result from behavioral systems theory that allows to parametrize all trajectories of a linear time-invariant (LTI) system based on one data trajectory with persistently exciting input component.
This parametrization lends itself naturally to designing controllers based directly on data, see~\cite{markovsky2021behavioral} for an extensive survey.
One prominent application is the design of model predictive control (MPC) schemes~\cite{rawlings2020model}, which can handle general performance criteria and input, state, or output constraints.
Here, the state-space model, which is commonly used to optimize over predicted trajectories, is replaced by the implicit data-driven system representation~\cite{yang2015data,coulson2019deepc}.
This direct data-driven control procedure has potential advantages if compared to the more established, indirect procedure of first identifying a model and then applying model-based MPC.
In particular, data-driven MPC (DD-MPC) is simple to implement in the sense that no intermediate model identification is required and it yields good empirical results in complex nonlinear control applications~\cite{huang2019power,elokda2021quadcopters,berberich2021at}.
Furthermore, DD-MPC admits strong theoretical guarantees in open~\cite{coulson2021distributionally,huang2021robust,yin2021maximum,pan2021stochastic} and closed~\cite{berberich2021guarantees,berberich2020constraints,berberich2021on,berberich2021linearpart2_extended,bongard2022robust,schmitz2022willems,kloeppelt2022novel,alsalti2022data} loop, even for noisy data or nonlinear systems, both scenarios in which the theoretical analysis of identification-based approaches is challenging.

In this paper, we provide a tutorial introduction to DD-MPC based on the Fundamental Lemma~\cite{willems2005note} with a focus on closed-loop stability guarantees for both noise-free and noisy data.
We present a generic framework for the theoretical analysis of robust DD-MPC schemes with noisy data which relies on \emph{inherent robustness}.
A (model-based) nominal MPC scheme is referred to as inherently robust if it is robust w.r.t.\ small disturbances, without resorting to explicit robustifications as in robust MPC~\cite[Section~3.5]{rawlings2020model}.
Various works have studied inherent robustness of nominal MPC with terminal constraints~\cite{yu2014inherent} and without terminal constraints~\cite[Theorem 7.26]{gruene2017nonlinear}, see~\cite{limon2009input,roset2008robustness,messina2005discrete} for more general results.

\begin{figure}
\begin{center}
\begin{tikzpicture}
\node at (0,0) {\textbf{Robust DD-MPC}};
\node at (4,0) {\textbf{Model-based MPC}};
\draw [thick] (-2,-0.5) -- (6,-0.5);
\draw [thick,dashed] (2,0.5) -- (2,-5.5);
\draw[rounded corners=4pt,fill=black!15!white] (-1.7,-0.8) rectangle (1,-2.5);
\node at (-0.35,-1.2) {\textbf{MPC scheme}};
\node at (-0.35,-1.6) {with noisy data};
\node at (-0.35,-2) {$\tilde{y}=y+\varepsilon$};
\draw [thick,->] (1,-1.65)--(3,-1.65) node[midway,above] {\textbf{Thm. IV.1}$\,$};
\draw[rounded corners=4pt,fill=black!15!white] (3,-0.8) rectangle (5.7,-2.5);
\node at (4.35,-1.2) {\textbf{MPC scheme}};
\node at (4.35,-1.6) {with disturbance};
\node at (4.35,-2) {$u=\bar{u}^*+d$};
\draw [thick,->] (4.35,-2.5)--(4.35,-3.8) node[midway,above,sloped] {\textbf{Prop.}};
\draw [thick,->] (4.35,-2.5)--(4.35,-3.8) node[midway,below,sloped] {\textbf{IV.1}};
\draw[rounded corners=4pt,fill=black!15!white] (3,-3.8) rectangle (5.7,-5.5);
\node at (4.35,-4.2) {\textbf{Inherent}};
\node at (4.35,-4.6) {\textbf{robustness}};
\node at (4.35,-5) {w.r.t.\ $d$};
\draw [thick,->] (3,-4.65)--(1,-4.65) node[midway,above] {\textbf{Cor. IV.1}};
\draw[rounded corners=4pt,fill=black!15!white] (-1.7,-3.8) rectangle (1,-5.5);
\node at (-0.35,-4.2) {\textbf{Practical}};
\node at (-0.35,-4.6) {\textbf{stability}};
\node at (-0.35,-5) {w.r.t.\ $\varepsilon$};
\end{tikzpicture}
\end{center}
\caption{Main idea of the robust stability proof of DD-MPC.}
\label{fig:idea}
\end{figure}
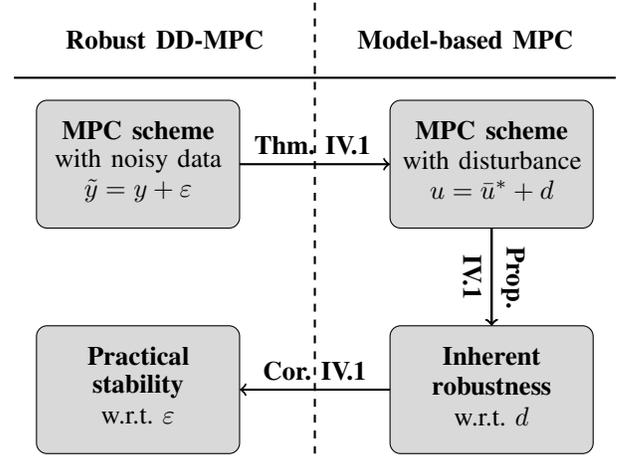

Our theoretical analysis involves a two-step procedure, see Figure~\ref{fig:idea}: 
First, we prove continuity of DD-MPC in the sense that output measurement noise can be translated into an additive input disturbance for the corresponding model-based MPC scheme.
Then, we employ inherent robustness properties of the latter to prove practical stability of the original DD-MPC scheme.
Notably, the robust stability proof relies on the same Lyapunov function used to prove stability of the model-based MPC.
The main advantage of our theoretical analysis is that it directly generalizes to a wide range of DD-MPC schemes, as long as the corresponding model-based MPC scheme is inherently robust.
The proof strategy in this paper originates from and extends~\cite{berberich2021linearpart2_extended}, which shows stability guarantees of a data-driven tracking MPC scheme for nonlinear systems.

\newpage
The paper is structured as follows.
After introducing some preliminaries in Section~\ref{sec:prelim}, we analyze DD-MPC with noise-free and noisy data in Sections~\ref{sec:nominal} and~\ref{sec:robust}, respectively.
Section~\ref{sec:conclusion} concludes the paper.

\subsubsection*{Notation}
We define $\bbi_{\geq0}$ as the set of nonnegative integers and $\bbi_{[a,b]}$ as the set of integers in the interval $[a,b]$.
We write $A=A^\top\succ0$ if $A$ is positive definite.
For a set of symmetric matrices $\{A_i\}_{i=1}^n$, $\lambda_{\min}(A_1,\dots,A_n)$ denotes the smallest of all eigenvalues of the $A_i$'s (and similarly for $\lambda_{\max}$).
For a vector $x\in\bbr^n$ and a matrix $P\succ0$, we define $\lVert x\rVert_P=\sqrt{x^\top Px}$ and $\lVert x\rVert_2=\sqrt{x^\top x}$.
We write $\calK_{\infty}$ for the set of continuous functions $\beta:\bbr_{\geq0}\to\bbr_{\geq0}$ which are strictly increasing, unbounded, and satisfy $\beta(0)=0$.
For a sequence $\{u_k\}_{k=0}^{N-1}$, we define the Hankel matrix
\begin{align*}
H_L(u)\coloneqq\begin{bmatrix}u_0&u_1&\dots&u_{N-L}\\
u_1&u_2&\dots&u_{N-L+1}\\
\vdots&\vdots&\ddots&\vdots\\
u_{L-1}&u_L&\dots&u_{N-1}
\end{bmatrix}
\end{align*}
as well as a window $u_{[a,b]}\coloneqq\begin{bmatrix}u_a^\top&\dots&u_b^\top\end{bmatrix}^\top$.
Further, we write $u=u_{[0,N-1]}$ for the stacked vector containing all entries of the sequence.

\section{Preliminaries}\label{sec:prelim}
We consider an LTI system
\begin{align}\label{eq:sys_LTI}
x_{k+1}&=Ax_k+Bu_k,\\\nonumber
y_k&=Cx_k+Du_k
\end{align}
with state $x_k\in\bbr^n$, input $u_k\in\bbr^m$, and output $y_k\in\bbr^p$, all at time $k\in\bbi_{\geq0}$.
We make the standing assumption that $(A,B)$ is controllable and $(A,C)$ is observable.
Throughout the paper, the matrices $A$, $B$, $C$, $D$, are unknown, but one input-output data trajectory $\{u_k^\rmd,y_k^\rmd\}_{k=0}^{N-1}$ of~\eqref{eq:sys_LTI} is available, which is noise-free (Section~\ref{sec:nominal}) or noisy (Section~\ref{sec:robust}).
The corresponding input component will be assumed to be persistently exciting in the following (standard) sense~\cite{willems2005note}.
\begin{definition}\label{def:pe}
We say that $\{u_k^\rmd\}_{k=0}^{N-1}$ is persistently exciting of order $L$ if $\mathrm{rank}(H_L(u^\rmd))=mL$.
\end{definition}
We now introduce the \emph{Fundamental Lemma} by~\cite{willems2005note}.
\begin{theorem}{\normalfont\cite{willems2005note}}\label{thm:willems}
Suppose $\{u_k^\rmd\}_{k=0}^{N-1}$ is persistently exciting of order $L+n$.
Then, $\{u_k,y_k\}_{k=0}^{L-1}$ is a trajectory of~\eqref{eq:sys_LTI} if and only if there exists $\alpha\in\bbr^{N-L+1}$ such that
\begin{align}\label{eq:thm_willems}
\begin{bmatrix}H_L(u^\rmd)\\H_L(y^\rmd)\end{bmatrix}\alpha=\begin{bmatrix}u\\y\end{bmatrix}.
\end{align}
\end{theorem}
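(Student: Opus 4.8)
My plan is to prove the two implications separately. Sufficiency is elementary; necessity reduces to a rank property of a Hankel matrix built from the \emph{input together with the (unmeasured) state} sequence, and this rank property is the technical core.

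\emph{Sufficiency.} Column $j\in\bbi_{[0,N-L]}$ of the stacked data matrix in~\eqref{eq:thm_willems} equals $\begin{bmatrix}u^\rmd_{[j,j+L-1]}\\y^\rmd_{[j,j+L-1]}\end{bmatrix}$, i.e., the length-$L$ window of the data trajectory starting at time $j$, which is itself a trajectory of~\eqref{eq:sys_LTI} (with initial condition the state of~\eqref{eq:sys_LTI} at time $j$ along the data trajectory). Since~\eqref{eq:sys_LTI} is linear, the set of its length-$L$ trajectories is a linear subspace of $\bbr^{(m+p)L}$, so any linear combination of these columns is again a length-$L$ trajectory. Hence every $\begin{bmatrix}u\\y\end{bmatrix}$ of the form~\eqref{eq:thm_willems} is a trajectory.

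\emph{Necessity.} Let $\{x_k^\rmd\}$ be the state sequence of~\eqref{eq:sys_LTI} generating the data and put $X_\rmd\coloneqq\begin{bmatrix}x_0^\rmd&\dots&x_{N-L}^\rmd\end{bmatrix}$. With $\calO_L\coloneqq\begin{bmatrix}C^\top&(CA)^\top&\dots&(CA^{L-1})^\top\end{bmatrix}^\top$ the extended observability matrix and $\calT_L$ the lower-block-triangular Toeplitz matrix of Markov parameters, the input/state/output relation of~\eqref{eq:sys_LTI} applied column-wise to the data gives
\begin{align}\label{eq:plan_factor}
\begin{bmatrix}H_L(u^\rmd)\\H_L(y^\rmd)\end{bmatrix}=\underbrace{\begin{bmatrix}I_{mL}&0\\\calT_L&\calO_L\end{bmatrix}}_{\eqqcolon M}\begin{bmatrix}H_L(u^\rmd)\\X_\rmd\end{bmatrix}.
\end{align}
Moreover, any length-$L$ trajectory $\{u_k,y_k\}_{k=0}^{L-1}$ with initial state $x_0$ satisfies $\begin{bmatrix}u\\y\end{bmatrix}=M\begin{bmatrix}u\\x_0\end{bmatrix}$, so the set of all length-$L$ trajectories is contained in $\mathrm{im}(M)$. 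I then claim
\begin{align}\label{eq:plan_rank}
\mathrm{rank}\begin{bmatrix}H_L(u^\rmd)\\X_\rmd\end{bmatrix}=mL+n,
\end{align}
i.e., this matrix maps onto $\bbr^{mL+n}$. Given~\eqref{eq:plan_rank}, \eqref{eq:plan_factor} shows the image of $\begin{bmatrix}H_L(u^\rmd)\\H_L(y^\rmd)\end{bmatrix}$ equals $\mathrm{im}(M)$, which contains every length-$L$ trajectory; hence every trajectory $\begin{bmatrix}u\\y\end{bmatrix}$ admits an $\alpha$ as in~\eqref{eq:thm_willems}, finishing necessity.

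It remains to prove~\eqref{eq:plan_rank}, which I expect to be the main obstacle and the only place where persistency of excitation of order $L+n$ (not merely $L$) and controllability of $(A,B)$ are used. I would argue by contradiction: suppose $\begin{bmatrix}\eta^\top&\xi^\top\end{bmatrix}$ with $\eta=\begin{bmatrix}\eta_0^\top&\dots&\eta_{L-1}^\top\end{bmatrix}^\top$ lies in the left kernel and is nonzero, i.e.
\begin{align}\label{eq:plan_kernel}
\sum_{i=0}^{L-1}\eta_i^\top u_{j+i}^\rmd+\xi^\top x_j^\rmd=0\quad\text{for all }j\in\bbi_{[0,N-L]}.
\end{align}
If $\xi=0$, then $\eta^\top H_L(u^\rmd)=0$ with $\eta\neq0$, contradicting persistency of excitation of order $L$ (implied by order $L+n$). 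If $\eta=0$, then $\xi^\top X_\rmd=0$ with $\xi\neq0$, contradicting $\mathrm{rank}(X_\rmd)=n$; the latter follows from controllability together with persistency of excitation via $\begin{bmatrix}x_n^\rmd&x_{n+1}^\rmd&\dots\end{bmatrix}-A^n\begin{bmatrix}x_0^\rmd&x_1^\rmd&\dots\end{bmatrix}=\begin{bmatrix}A^{n-1}B&\dots&B\end{bmatrix}H_n(u^\rmd)$ and the full row rank of $H_n(u^\rmd)$. If both $\eta\neq0$ and $\xi\neq0$, let $i^\ast\coloneqq\max\{i:\eta_i\neq0\}$, substitute the state response $x_{j+\ell}^\rmd=A^\ell x_j^\rmd+\sum_{s=0}^{\ell-1}A^{\ell-1-s}Bu_{j+s}^\rmd$ into~\eqref{eq:plan_kernel} at $j,j+1,\dots,j+n$, and combine these $n+1$ identities with the characteristic-polynomial coefficients of $A$, so the state term vanishes by Cayley--Hamilton ($\sum_{\ell=0}^n c_\ell\xi^\top A^\ell=0$, $c_n=1$). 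This eliminates $x_j^\rmd$ and leaves a purely input-dependent relation $\sum_{r=0}^{L+n-1}\mu_r^\top u_{j+r}^\rmd=0$ for all $j\in\bbi_{[0,N-L-n]}$, i.e.\ $\mu^\top H_{L+n}(u^\rmd)=0$; tracking coefficients gives $\mu_{i^\ast+n}=c_n\eta_{i^\ast}=\eta_{i^\ast}\neq0$, so $\mu\neq0$, contradicting persistency of excitation of order $L+n$. This establishes~\eqref{eq:plan_rank}.

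Note that observability of $(A,C)$ plays no role here (it is relevant only for other parts of the data-driven MPC analysis, e.g.\ recovering an initial condition from input/output windows). The decisive step is the rank identity~\eqref{eq:plan_rank}: once it is available, both directions follow from purely linear-algebraic manipulations of the factorization~\eqref{eq:plan_factor}.
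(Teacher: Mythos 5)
The paper itself does not prove Theorem~\ref{thm:willems}; it is quoted from~\cite{willems2005note}, and your overall route is in fact the classical one from that reference: factor $\begin{bmatrix}H_L(u^\rmd)^\top&H_L(y^\rmd)^\top\end{bmatrix}^\top=M\begin{bmatrix}H_L(u^\rmd)^\top&X_\rmd^\top\end{bmatrix}^\top$ and reduce everything to the rank identity $\mathrm{rank}\begin{bmatrix}H_L(u^\rmd)^\top&X_\rmd^\top\end{bmatrix}^\top=mL+n$ (this is exactly the statement the paper later invokes as \cite[Corollary 2]{willems2005note} in the appendix). Your sufficiency argument is fine, the image argument for necessity is fine, and your Cayley--Hamilton elimination in the case $\eta\neq0$ is correct: the leading block of the resulting annihilator of $H_{L+n}(u^\rmd)$ is $c_n\eta_{i^\ast}=\eta_{i^\ast}\neq0$, so persistency of excitation of order $L+n$ yields the contradiction (note this case needs neither $\xi\neq0$ nor controllability, so it subsumes your first case).

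The genuine gap is the case $\eta=0$, i.e., your claim that $\mathrm{rank}(X_\rmd)=n$ "follows from" the identity $\begin{bmatrix}x_n^\rmd&\dots\end{bmatrix}-A^n\begin{bmatrix}x_0^\rmd&\dots\end{bmatrix}=\begin{bmatrix}A^{n-1}B&\dots&B\end{bmatrix}H_n(u^\rmd)$ and full row rank of $H_n(u^\rmd)$. That identity only shows the matrix $X_\rmd^+-A^nX_\rmd^-$ has rank $n$; a left annihilator $\xi$ of $X_\rmd$ annihilates $X_\rmd^+$ but not, in general, $A^nX_\rmd^-$, so no contradiction follows -- one only gets $\mathrm{colspace}(X_\rmd)+A^n\,\mathrm{colspace}(X_\rmd)=\bbr^n$, hence $\mathrm{rank}(X_\rmd)\geq n/2$, which is not enough (and indeed $\mathrm{rank}(X_\rmd)=n$ is essentially part of the lemma being proved, not a cheaper fact). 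The fix is to reuse your own Cayley--Hamilton device in this case: from $\xi^\top x_{j+\ell}^\rmd=0$ substitute $x_{j+\ell}^\rmd=A^\ell x_j^\rmd+\sum_{s=0}^{\ell-1}A^{\ell-1-s}Bu_{j+s}^\rmd$ for $\ell=0,\dots,n$, weight by the characteristic-polynomial coefficients and sum; the state term vanishes and you obtain that the vector with blocks $\sum_{\ell=s+1}^{n}c_\ell\,\xi^\top A^{\ell-1-s}B$, $s=0,\dots,n-1$, annihilates the depth-$n$ input Hankel matrix restricted to columns $j=0,\dots,N-L-n$. This restricted matrix has full row rank because it consists of the first $mn$ rows of $H_{L+n}(u^\rmd)$ (a point you should also make explicit where you invoke "full row rank of $H_n(u^\rmd)$", since a column restriction of a full-row-rank Hankel matrix need not retain that rank in general). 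Triangular back-substitution starting at $s=n-1$ then gives $\xi^\top B=0$, $\xi^\top AB=0,\dots,\xi^\top A^{n-1}B=0$, and controllability forces $\xi=0$, the desired contradiction. With that sub-case repaired, your proof is complete and matches the standard argument; your remark that observability is not needed for this theorem is correct.
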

Theorem~\ref{thm:willems} parametrizes all trajectories of the LTI system~\eqref{eq:sys_LTI}, using only measured data and no explicit model knowledge.
The result is at the core of numerous recent data-driven control approaches, see~\cite{markovsky2021behavioral} for an overview, and will be used to set up DD-MPC schemes in the present paper.
Since we only have access to input-output data, we (implicitly) work with the extended, non-minimal state
\begin{align}\label{eq:xi_def}
\xi_t\coloneqq\begin{bmatrix}u_{[t-n,t-1]}^\top&y_{[t-n,t-1]}^\top\end{bmatrix}^\top\in\bbr^{(m+p)n}.
\end{align}
While Theorem~\ref{thm:willems} as well as the definition of $\xi_t$ require knowledge of the system order $n$, the value $n$ can be replaced by an arbitrary upper bound.

\section{Data-driven MPC with noise-free data}\label{sec:nominal}
\begin{subequations}\label{eq:DD_MPC}
Our control goal is stabilization of the origin\footnote{Non-zero setpoints can be considered with straightforward modifications and are omitted for brevity.} of~\eqref{eq:sys_LTI} while satisfying pointwise-in-time constraints on the input and the output, i.e., $u_t\in\bbu$, $y_t\in\bby$ for all $t\in\bbi_{\geq0}$ with closed sets $\bbu\subseteq\bbr^m$, $\bby\subseteq\bbr^p$.
To this end, we design a DD-MPC scheme based on Theorem~\ref{thm:willems}.
At time $t\in\bbi_{\geq0}$ and for given initial conditions $\{u_k,y_k\}_{k=t-n}^{t-1}$, we consider the following optimal control problem
\begin{align}\label{eq:DD_MPC_cost}
\underset{\alpha(t),\bar{u}(t),\bar{y}(t)}{\min}&\sum_{k=0}^{L-1}
\lVert\bar{u}_k(t)\rVert_R^2+\lVert\bar{y}_k(t)\rVert_Q^2\\
\label{eq:DD_MPC_hankel} \text{s.t.}\>\> &\>\begin{bmatrix}
\bar{u}(t)\\\bar{y}(t)\end{bmatrix}=\begin{bmatrix}H_{L+n}(u^\rmd)\\H_{L+n}(y^\rmd)\end{bmatrix}\alpha(t),\\\label{eq:DD_MPC_init}
&\>\begin{bmatrix}\bar{u}_{[-n,-1]}(t)\\\bar{y}_{[-n,-1]}(t)\end{bmatrix}=\begin{bmatrix}u_{[t-n,t-1]}\\y_{[t-n,t-1]}\end{bmatrix},\\\label{eq:DD_MPC_constraints}
&\>\bar{u}_k(t)\in\mathbb{U},\>\bar{y}_k(t)\in\mathbb{Y},\>k\in\mathbb{I}_{[0,L]},
\\\label{eq:DD_MPC_TEC}
&\>\bar{u}_k(t)=0,\>\bar{y}_k(t)=0,\>k\in\bbi_{[L-n,L-1]}.
\end{align}
\end{subequations}
As in standard (model-based) MPC~\cite{rawlings2020model}, Problem~\eqref{eq:DD_MPC} minimizes the deviation from the setpoint $(u,y)=(0,0)$ over the horizon $L\geq n$, weighted by $Q,R\succ0$.
The constraint~\eqref{eq:DD_MPC_hankel} ensures that the input-output trajectory $(\bar{u}(t),\bar{y}(t))$ predicted at time $t$ is indeed a trajectory of~\eqref{eq:sys_LTI}, compare Theorem~\ref{thm:willems}.
This trajectory is of length $L+n$, since the first $n$ components are required to implicitly fix the initial conditions via the past $n$ input-output measurements in~\eqref{eq:DD_MPC_init}.
Finally, Problem~\eqref{eq:DD_MPC} contains input-output constraints~\eqref{eq:DD_MPC_constraints} as well as terminal equality constraints~\eqref{eq:DD_MPC_TEC} for the extended state $\xi_t$ in order to ensure stability.

We write $J_L^*(\xi_t)$ for the optimal cost of Problem~\eqref{eq:DD_MPC}, where $\xi_t$ is the extended state corresponding to $\{u_k,y_k\}_{k=t-n}^{t-1}$, see~\eqref{eq:xi_def}.
Further, the optimal solution of Problem~\eqref{eq:DD_MPC} at time $t$ is denoted by $\bar{u}^*(t)$, $\bar{y}^*(t)$, $\alpha^*(t)$.
On the other hand, closed-loop variables at time $t$ are written as $u_t$, $y_t$, $\xi_t$.
Problem~\eqref{eq:DD_MPC} is used to implement an MPC scheme in a standard fashion~\cite{rawlings2020model}:
At time $t$, we measure $\{u_k,y_k\}_{k=t-n}^{t-1}$, solve Problem~\eqref{eq:DD_MPC}, and apply the first component of the optimal input, i.e., $u_t=\bar{u}_0^*(t)$.

Let us state the main theoretical assumptions to derive closed-loop stability guarantees.
\begin{assumption}\label{ass:pe}
The input $\{u_k^\rmd\}_{k=0}^{N-1}$ generating the data is persistently exciting of order $L+2n$.
\end{assumption}
We assume persistence of excitation of order $L+2n$, although Theorem~\ref{thm:willems} only requires $L+n$, since the trajectory length in Problem~\eqref{eq:DD_MPC} is $L+n$ due to the initial conditions~\eqref{eq:DD_MPC_init}.
\begin{assumption}\label{ass:upper_bound}
There exists $c_\rmu>0$ such that $J_L^*(\xi)\leq c_\rmu\lVert\xi\rVert_2^2$ for any $\xi$ such that Problem~\eqref{eq:DD_MPC} is feasible.
\end{assumption}
Assuming a quadratic upper bound for $J_L^*(\xi)$ is not restrictive and holds, e.g., for polytopic constraints~\cite{bemporad2002explicit}.
By detectability of the state-space realization corresponding to $\xi$, there exists an input-output-to-state stability (IOSS) Lyapunov function $W(\xi)=\lVert \xi\rVert_P^2$ for some $P\succ0$ satisfying
\begin{align}\label{eq:thm_nominal_proof2}
W(\xi_{t+1})-W(\xi_t)\leq-\lVert \xi_t\rVert_2^2+c_{\mathrm{IOSS}}(\lVert u_t\rVert_R^2+\lVert y_t\rVert_Q^2)
\end{align}
for any feasible trajectory and a suitable $c_{\mathrm{IOSS}}>0$~\cite{cai2008input}.
In the following theoretical analysis, we employ the Lyapunov function candidate
\begin{align}\label{eq:LF_nominal}
V(\xi)=J_L^*(\xi)+\frac{1}{c_{\mathrm{IOSS}}} W(\xi).
\end{align}
\begin{theorem}\label{thm:nominal}
Suppose Assumptions~\ref{ass:pe} and~\ref{ass:upper_bound} hold.
If Problem~\eqref{eq:DD_MPC} is feasible at time $t=0$, then 
(i) it is feasible at any $t\in\bbi_{\geq0}$, 
(ii) the closed loop satisfies the constraints, i.e., $u_t\in\bbu$, $y_t\in\bby$ for all $t\in\bbi_{\geq0}$, and 
(iii) the origin $\xi=0$ is exponentially stable for the resulting closed loop.
\end{theorem}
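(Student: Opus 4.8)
The argument is the standard Lyapunov-based MPC stability proof, adapted to the data-driven setting via the Fundamental Lemma. The plan is to establish recursive feasibility first, then use this to derive a descent inequality for the candidate function $V$ in~\eqref{eq:LF_nominal}, and finally combine it with a quadratic lower bound to conclude exponential stability.

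\emph{Step 1 (Recursive feasibility).} Suppose Problem~\eqref{eq:DD_MPC} is feasible at time $t$ with optimal solution $\bar{u}^*(t),\bar{y}^*(t),\alpha^*(t)$, and let $u_t=\bar{u}^*_0(t)$ be applied to~\eqref{eq:sys_LTI}. By the ``only if'' direction of Theorem~\ref{thm:willems}, the shifted candidate sequence obtained by dropping the first step of $(\bar{u}^*(t),\bar{y}^*(t))$ and appending a zero input-output pair at the end is again a trajectory of~\eqref{eq:sys_LTI}: the tail is consistent with the terminal equality constraint~\eqref{eq:DD_MPC_TEC}, which forces the predicted extended state to be zero, so appending $(0,0)$ keeps it at the equilibrium and respects~\eqref{eq:DD_MPC_constraints} (assuming $0\in\bbu$, $0\in\bby$). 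Then the ``if'' direction of Theorem~\ref{thm:willems} — using persistence of excitation of order $L+2n$ from Assumption~\ref{ass:pe}, which covers trajectories of length $L+n$ — yields an $\alpha(t+1)$ realizing this shifted trajectory via~\eqref{eq:DD_MPC_hankel}. The new initial condition~\eqref{eq:DD_MPC_init} at $t+1$ is matched because the closed-loop measurement $(u_t,y_t)$ coincides with the first predicted step at time $t$. Hence Problem~\eqref{eq:DD_MPC} is feasible at $t+1$, and (i) follows by induction; (ii) is immediate since $u_t=\bar{u}^*_0(t)\in\bbu$ and the realized $y_t$ equals $\bar{y}^*_0(t)\in\bby$.

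\emph{Step 2 (Descent of $V$).} Using the shifted candidate from Step 1 as a (suboptimal) feasible solution at $t+1$, its cost telescopes to $J^*_L(\xi_t) - \lVert u_t\rVert_R^2 - \lVert y_t\rVert_Q^2 + 0$ (the appended terminal stage costs zero and the dropped stage is the stage cost at time $0$). Optimality at $t+1$ gives
\begin{align*}
J^*_L(\xi_{t+1}) - J^*_L(\xi_t) \leq -\lVert u_t\rVert_R^2 - \lVert y_t\rVert_Q^2.
\end{align*}
Adding $\tfrac{1}{c_{\mathrm{IOSS}}}$ times the IOSS inequality~\eqref{eq:thm_nominal_proof2} and recalling~\eqref{eq:LF_nominal},
\begin{align*}
V(\xi_{t+1}) - V(\xi_t) \leq -\tfrac{1}{c_{\mathrm{IOSS}}}\lVert\xi_t\rVert_2^2,
\end{align*}
since the $\tfrac{1}{c_{\mathrm{IOSS}}}c_{\mathrm{IOSS}}(\lVert u_t\rVert_R^2+\lVert y_t\rVert_Q^2)$ term cancels the stage cost.

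\emph{Step 3 (Exponential stability).} For the bounds on $V$: the lower bound $V(\xi)\geq \tfrac{1}{c_{\mathrm{IOSS}}}\lambda_{\min}(P)\lVert\xi\rVert_2^2$ follows from $J^*_L\geq 0$ and $W(\xi)=\lVert\xi\rVert_P^2$; the upper bound $V(\xi)\leq (c_\rmu + \tfrac{1}{c_{\mathrm{IOSS}}}\lambda_{\max}(P))\lVert\xi\rVert_2^2$ follows from Assumption~\ref{ass:upper_bound} and $W(\xi)=\lVert\xi\rVert_P^2$. Combining these with the descent inequality of Step 2 (and noting $\lVert\xi_t\rVert_2^2 \geq \tfrac{1}{c_\rmu + c_{\mathrm{IOSS}}^{-1}\lambda_{\max}(P)} V(\xi_t)$) gives $V(\xi_{t+1}) \leq \gamma V(\xi_t)$ for some $\gamma\in(0,1)$, hence $V(\xi_t)\leq \gamma^t V(\xi_0)$, and sandwiching with the quadratic bounds yields exponential stability of $\xi=0$, proving (iii).

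\emph{Main obstacle.} The routine parts are Steps 2–3; the only genuine subtlety is Step 1, specifically verifying that the appended equilibrium pair combined with the shifted optimal sequence is actually an admissible trajectory of~\eqref{eq:sys_LTI} realizable through the Hankel parametrization~\eqref{eq:DD_MPC_hankel}. This hinges on two points: that the terminal equality constraint~\eqref{eq:DD_MPC_TEC} genuinely pins the extended state $\xi$ to zero (so that $(0,0)$ is a valid continuation dynamically), and that the order of persistence of excitation in Assumption~\ref{ass:pe} is high enough — $L+2n$ rather than $L+n$ — to invoke Theorem~\ref{thm:willems} for the length-$(L+n)$ shifted trajectory including its own $n$-step initial window. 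Both are addressed by the assumptions as stated, but they are where care is needed.
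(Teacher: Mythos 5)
Your proof is correct and follows essentially the same route the paper sketches: recursive feasibility via the shifted candidate exploiting the terminal equality constraint (which, by observability, pins the internal state to zero so the appended $(0,0)$ step is admissible and realizable through the Hankel parametrization under persistency of excitation of order $L+2n$), the resulting cost-decrease inequality, and then augmentation with the IOSS Lyapunov function plus the quadratic bounds from Assumption~\ref{ass:upper_bound} to conclude exponential stability. You merely fill in the standard details that the paper delegates to the cited reference, so there is nothing substantive to add.
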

\begin{proof}
We only provide a sketch of the proof and refer to~\cite{berberich2021guarantees} for further details.
Statements (i) and (ii) can be shown using standard MPC arguments, i.e., shifting the previously optimal solution to construct a candidate solution at the next time step~\cite{rawlings2020model}.
Using this candidate, it is easy to show that
\begin{align}\label{eq:thm_nominal_proof1}
J_L^*(\xi_{t+1})-J_L^*(\xi_t)\leq -\lVert u_t\rVert_R^2-\lVert y_t\rVert_Q^2.
\end{align}
From~\eqref{eq:thm_nominal_proof2} and~\eqref{eq:thm_nominal_proof1}, we infer
\begin{align}
V(\xi_{t+1})-V(\xi_t)\leq-\frac{1}{c_{\mathrm{IOSS}}}\lVert \xi_t\rVert_2^2.
\end{align}
Together with the trivial lower bound $V(\xi)\geq\frac{\lambda_{\min}(P)}{c_{\mathrm{IOSS}}}\lVert \xi\rVert_2^2$ and the upper bound $V(\xi)\leq \left(c_\rmu+\frac{\lambda_{\max}(P)}{c_{\mathrm{IOSS}}}\right)\lVert \xi\rVert_2^2$ (Assumption~\ref{ass:upper_bound}), this implies exponential stability via standard Lyapunov methods.
\end{proof}

Theorem~\ref{thm:nominal} shows that the DD-MPC scheme defined via Problem~\eqref{eq:DD_MPC} exponentially stabilizes the closed loop while satisfying the input-output constraints.
Since the Fundamental Lemma (Theorem~\ref{thm:willems}) provides an exact parametrization of input-output trajectories, the presented MPC scheme is equivalent to a model-based one, allowing for analogous steps in the stability proof~\cite{rawlings2020model}.
The main technical challenge is that the cost~\eqref{eq:DD_MPC_cost} only involves the predicted input and output and is, therefore, only positive \emph{semi}-definite in the internal state $x$.
This necessitates the use of detectability properties via an IOSS Lyapunov function, similar to model-based MPC with positive semidefinite stage cost~\cite{grimm2005model}.

\section{Data-driven MPC with noisy data}\label{sec:robust}
In this section, we extend the results of Section~\ref{sec:nominal} to the noisy case.
To be precise, we assume that the data are affected by output measurement noise, i.e., we have access to $\{u_k^\rmd,\tilde{y}_k^\rmd\}_{k=0}^{N-1}$, where $\tilde{y}_k^\rmd=y_k^\rmd+\varepsilon_k^\rmd$ with $\lVert \varepsilon_k^\rmd\rVert_2\leq\bar{\varepsilon}$, $k\in\bbi_{[0,N-1]}$, for some  $\bar{\varepsilon}>0$.
Similarly, the online output measurements used to specify initial conditions (compare~\eqref{eq:DD_MPC_init}) are noisy, i.e., we measure $\tilde{y}_k=y_k+\varepsilon_k$ with $\lVert\varepsilon_k\rVert_2\leq\bar{\varepsilon}$, $k\in\bbi_{\geq0}$.

In Section~\ref{subsec:robust_continuity}, we derive the main technical result translating noise in DD-MPC into an input disturbance for model-based MPC.
In Section~\ref{subsec:robust_stab}, we then combine this result with inherent robustness of model-based MPC to prove closed-loop (practical) stability.

\subsection{Continuity of data-driven optimal control}\label{subsec:robust_continuity}
In comparison to the noise-free setting (Section~\ref{sec:nominal}), we include the following additional assumption.
\begin{assumption}\label{ass:output_constraints}
The set $\bbu$ is a convex, compact polytope and $\bby=\bbr^p$.
\end{assumption}
\begin{subequations}\label{eq:DD_MPC_robust}
Output constraints require a constraint tightening, compare~\cite{berberich2020constraints,kloeppelt2022novel}, and are omitted for simplicity.
At time $t$ and for a given set of (noisy) initial conditions $\{u_k,\tilde{y}_k\}_{k=t-n}^{t-1}$, we consider the optimization problem
\begin{align}\label{eq:DD_MPC_robust_cost}
\underset{\substack{\hat{\alpha}(t),\hat{\sigma}(t)\\\hat{u}(t),\hat{y}(t)}}{\min}&\sum_{k=0}^{L-1}
\lVert\hat{u}_k(t)\rVert_R^2+\lVert\hat{y}_k(t)\rVert_Q^2+\lambda_{\alpha}\bar{\varepsilon}^{\beta_{\alpha}}\lVert\hat{\alpha}(t)\rVert_2^2\\\nonumber
&+\frac{\lambda_{\sigma}}{\bar{\varepsilon}^{\beta_{\sigma}}}\lVert\hat{\sigma}(t)\rVert_2^2\\
\label{eq:DD_MPC_robust_hankel} \text{s.t.}\>\> &\>\begin{bmatrix}
\hat{u}(t)\\\hat{y}(t)+\hat{\sigma}(t)\end{bmatrix}=\begin{bmatrix}H_{L+n}(u^\rmd)\\H_{L+n}(\tilde{y}^\rmd)\end{bmatrix}\hat{\alpha}(t),\\\label{eq:DD_MPC_robust_init}
&\>\begin{bmatrix}\hat{u}_{[-n,-1]}(t)\\\hat{y}_{[-n,-1]}(t)\end{bmatrix}=\begin{bmatrix}u_{[t-n,t-1]}\\\tilde{y}_{[t-n,t-1]}\end{bmatrix},
\\\label{eq:DD_MPC_robust_TEC}
&\>\hat{u}_k(t)=0,\>\hat{y}_k(t)=0,\>k\in\bbi_{[L-n,L-1]},\\\label{eq:DD_MPC_robust_constraints}
&\>\hat{u}_k(t)\in\mathbb{U},\>k\in\mathbb{I}_{[0,L]}.
\end{align}
\end{subequations}
In Problem~\eqref{eq:DD_MPC_robust}, the noise-free data $y^\rmd$ and initial conditions $y_{[t-n,t-1]}$ appearing in Problem~\eqref{eq:DD_MPC} have been replaced by their noisy counterparts.
To robustify against the noise, Problem~\eqref{eq:DD_MPC_robust} includes a slack variable $\hat{\sigma}(t)$ which relaxes the constraint~\eqref{eq:DD_MPC_robust_hankel} due to the noisy output measurements.
The slack variable is regularized with parameters $\lambda_{\sigma},\beta_{\sigma}>0$ to avoid a large prediction error.
Additionally, $\hat{\alpha}(t)$ is regularized with parameters $\lambda_{\alpha},\beta_{\alpha}>0$ in order to reduce the influence of the noise on the prediction in~\eqref{eq:DD_MPC_hankel}.
We note that similar modifications of DD-MPC to cope with noise were suggested by~\cite{coulson2019deepc,coulson2021distributionally,huang2021robust,berberich2021guarantees,bongard2022robust}.
In case of noise-free data, i.e., for $\bar{\varepsilon}\to0$, Problem~\eqref{eq:DD_MPC_robust} reduces to the nominal one~\eqref{eq:DD_MPC}.
Throughout this section, the optimization variables associated to Problem~\eqref{eq:DD_MPC_robust} are written as $\hat{\alpha}(t)$, $\hat{\sigma}(t)$, $\hat{u}(t)$, $\hat{y}(t)$, whereas we denote the optimization variables corresponding to Problem~\eqref{eq:DD_MPC} by $\alpha(t)$, $\bar{u}(t)$, $\bar{y}(t)$.
In particular, we write $\hat{\alpha}^*(t)$, $\hat{\sigma}^*(t)$, $\hat{u}^*(t)$, $\hat{y}^*(t)$ for the optimal solution of Problem~\eqref{eq:DD_MPC_robust} at time $t$.
Further, we denote the corresponding optimal cost by $\hat{J}_L^*(\tilde{\xi}_t)$, where $\tilde{\xi}_t\coloneqq\begin{bmatrix}u_{[t-n,t-1]}^\top&\tilde{y}_{[t-n,t-1]}^\top\end{bmatrix}^\top$ is the noisy extended state, compare~\eqref{eq:xi_def}.

\begin{assumption}\label{ass:continuity}
Problem~\eqref{eq:DD_MPC} satisfies a linear independence constraint qualification (LICQ), i.e., the row entries of the equality and active inequality constraints are linearly independent.
Moreover, $\beta_{\alpha}+\beta_{\sigma}<2$.
\end{assumption}

Assuming an LICQ is common in linear-quadratic MPC~\cite{bemporad2002explicit}, and relaxing this assumption is an interesting direction for future research.
The condition $\beta_{\alpha}+\beta_{\sigma}<2$ is required for a technical argument in the proof and can be satisfied by design.
The following result shows that the output measurement noise in Problem~\eqref{eq:DD_MPC_robust} translates into an input disturbance for the nominal problem~\eqref{eq:DD_MPC}.
The statement as well as the subsequent stability analysis rely on the Lyapunov function $V(\xi)$ defined in~\eqref{eq:LF_nominal}, which is used to prove stability in the nominal (equivalently, model-based) case, compare Theorem~\ref{thm:nominal}.
\begin{theorem}\label{thm:continuity}
Suppose Assumptions~\ref{ass:pe},~\ref{ass:output_constraints}, and~\ref{ass:continuity} hold.
Then, for any $\overline{V}>0$, there exists $\beta_\rmu\in\calK_{\infty}$ such that, if $V(\xi_t)\leq\overline{V}$ then
\begin{align}\label{eq:thm_continuity}
\lVert\hat{u}^*(t)-\bar{u}^*(t)\rVert_2\leq\beta_\rmu(\bar{\varepsilon}).
\end{align}
\end{theorem}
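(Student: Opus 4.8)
The plan is to prove the estimate \eqref{eq:thm_continuity} by establishing continuity of the optimal solutions of the parametrized family of quadratic programs \eqref{eq:DD_MPC_robust} with respect to the noise level $\bar\varepsilon$, where $\bar\varepsilon=0$ recovers the nominal problem \eqref{eq:DD_MPC}. The starting observation is that, for $\bar\varepsilon=0$, Problem~\eqref{eq:DD_MPC_robust} is essentially Problem~\eqref{eq:DD_MPC}: the slack regularization $\tfrac{\lambda_\sigma}{\bar\varepsilon^{\beta_\sigma}}\lVert\hat\sigma(t)\rVert_2^2$ becomes infinitely penalizing so that $\hat\sigma(t)=0$, the regularizer $\lambda_\alpha\bar\varepsilon^{\beta_\alpha}\lVert\hat\alpha(t)\rVert_2^2$ vanishes, and the Hankel data coincide. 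The effect of noise enters through three channels: (a) the perturbed Hankel matrix $H_{L+n}(\tilde y^\rmd)$ in \eqref{eq:DD_MPC_robust_hankel}, (b) the perturbed initial condition $\tilde y_{[t-n,t-1]}$ in \eqref{eq:DD_MPC_robust_init}, and (c) the two $\bar\varepsilon$-dependent regularization terms in the cost. I would parametrize all three perturbations by a single scalar $\bar\varepsilon$ and view the resulting QP as a perturbed version of the nominal one.

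The main technical tool is a sensitivity/Lipschitz result for parametric quadratic programs: under an LICQ at the nominal solution (Assumption~\ref{ass:continuity}), together with a second-order sufficient condition, the primal optimizer is locally Lipschitz continuous in the problem data (cost matrices, constraint matrices, right-hand sides); this is classical, e.g.\ in the spirit of \cite{bemporad2002explicit} for multiparametric QPs. Concretely, I would proceed as follows. First, fix $\overline V>0$ and note that the sublevel set $\{\xi : V(\xi)\le\overline V\}$ is compact (by the quadratic lower bound on $V$), so $\xi_t$ ranges over a compact set; on this set, feasibility of \eqref{eq:DD_MPC} holds and the nominal optimizer $\bar u^*(t)$ depends continuously on $\xi_t$. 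Second, I would bound the perturbation of the problem data in terms of $\bar\varepsilon$: the entries of $H_{L+n}(\tilde y^\rmd)-H_{L+n}(y^\rmd)$ are of size $O(\bar\varepsilon)$, the initial-condition shift $\tilde y_{[t-n,t-1]}-y_{[t-n,t-1]}$ is of size $O(\bar\varepsilon)$, and the cost perturbation involves $\bar\varepsilon^{\beta_\alpha}$ and $\bar\varepsilon^{-\beta_\sigma}$. Third, I would handle the singular $\bar\varepsilon^{-\beta_\sigma}$ term by a scaling argument: since $\beta_\alpha+\beta_\sigma<2$, one can show a priori that the optimal slack satisfies $\lVert\hat\sigma^*(t)\rVert_2 = O(\bar\varepsilon^{1+\beta_\sigma/2})$ or similar, so that after eliminating $\hat\sigma$ (or restricting to the effective feasible set) the problem becomes a genuinely regularly-perturbed QP whose data converge to the nominal data as $\bar\varepsilon\to0$, with a modulus of continuity that is a $\calK_\infty$ function of $\bar\varepsilon$. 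Fourth, I would invoke the parametric-QP Lipschitz/continuity result uniformly over the compact set of $\xi_t$'s to conclude $\lVert\hat u^*(t)-\bar u^*(t)\rVert_2\le\beta_\rmu(\bar\varepsilon)$ for some $\beta_\rmu\in\calK_\infty$ depending on $\overline V$.

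The hard part will be making the perturbation argument rigorous around the slack regularization: the term $\tfrac{\lambda_\sigma}{\bar\varepsilon^{\beta_\sigma}}\lVert\hat\sigma\rVert_2^2$ blows up as $\bar\varepsilon\to0$, so the cost does not converge to the nominal cost in any naive sense, and one cannot directly apply an off-the-shelf continuity theorem. The condition $\beta_\alpha+\beta_\sigma<2$ is precisely what is needed to balance the two regularizers: it guarantees that the noise-induced infeasibility of \eqref{eq:DD_MPC_hankel} (which forces $\hat\sigma\neq0$) is small enough that the penalized slack cost stays bounded and, in fact, vanishes, while the $\hat\alpha$-regularizer is not so strong as to distort the nominal optimizer by more than $O(\bar\varepsilon^{\beta_\alpha/2})$. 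So the core estimate is an a~priori bound showing $\lVert\hat\sigma^*(t)\rVert_2$ is of the right order in $\bar\varepsilon$ and that the perturbed optimizer stays in a neighborhood of the nominal one where LICQ and strong convexity (from $Q,R\succ0$ plus the regularizers) give quantitative stability; combining these with a compactness argument over $\{V(\xi_t)\le\overline V\}$ yields the uniform $\calK_\infty$ bound. A secondary subtlety is that active inequality constraints from \eqref{eq:DD_MPC_robust_constraints} may change between the nominal and perturbed problems, but since there are finitely many active sets and each gives a Lipschitz-continuous branch, taking a maximum over branches preserves the $\calK_\infty$ bound.
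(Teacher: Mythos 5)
Your high-level reading of the problem is right (the three perturbation channels, the need for an a priori bound on the slack, the role of $\beta_\alpha+\beta_\sigma<2$, LICQ-based QP sensitivity, uniformity over the sublevel set $\{V\le\overline V\}$), but the step where you declare that after eliminating $\hat\sigma$ the problem becomes ``a genuinely regularly-perturbed QP'' to which an off-the-shelf parametric-QP Lipschitz theorem applies is exactly where the real difficulty sits, and as stated it does not go through. Two structural features of this pair of problems break the standard sensitivity argument. First, the nominal problem~\eqref{eq:DD_MPC} has \emph{no} regularization of $\alpha$: its Hessian is only positive semidefinite in the $\alpha$-direction, the minimizer $\alpha^*(t)$ is non-unique, and LICQ as assumed (linear independence of constraint rows) does not repair this; so Lipschitz continuity of the primal optimizer in the problem data — in particular in the quadratic cost matrix, which is perturbed by $\lambda_\alpha\bar\varepsilon^{\beta_\alpha}I$ on the degenerate block — is not available from the multiparametric-QP literature you invoke, and your appeal to ``strong convexity from $Q,R\succ0$ plus the regularizers'' is circular at $\bar\varepsilon=0$. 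Second, the noise perturbs the constraint \emph{matrix}, and it enters multiplied by $\hat\alpha(t)$, whose only a priori bound is $\lVert\hat\alpha^*(t)\rVert_2\lesssim\bar\varepsilon^{-\beta_\alpha/2}$ (from the cost), so the effective constraint perturbation is of order $\bar\varepsilon^{1-\beta_\alpha/2}$, not $O(\bar\varepsilon)$; treating it as a small data perturbation without tracking this coupling loses the very balance that $\beta_\alpha+\beta_\sigma<2$ is meant to enforce. Relatedly, your claimed slack order $\lVert\hat\sigma^*(t)\rVert_2=O(\bar\varepsilon^{1+\beta_\sigma/2})$ is not what the cost bound gives (it gives $O(\bar\varepsilon^{\beta_\sigma/2})$), and you never establish the prerequisite cost upper bound $\hat J_L^*(\tilde\xi_t)\le J_L^*(\xi_t)+\beta_1(\bar\varepsilon)$, which requires constructing a feasible candidate for the noisy problem with a \emph{bounded} $\hat\alpha$ — the paper does this via the full-row-rank input/state Hankel matrix and its pseudoinverse, and without it none of your a priori bounds on $\hat\alpha^*,\hat\sigma^*$ exist.

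The paper circumvents both obstructions by never comparing the two problems through a single sensitivity theorem. It chains three comparisons, each using only strong convexity in the \emph{input} component (so non-uniqueness of $\alpha$ is irrelevant): (a) the noisy optimizer is compared, via cost inequalities, to an auxiliary problem that is the nominal one plus the $\alpha$-regularizer and a right-hand-side perturbation $\tilde\sigma$ built from $\hat\sigma^*(t)$ and $H_{L+n}(\varepsilon^\rmd)\hat\alpha^*(t)$; (b) that right-hand-side perturbation is sent to zero using the multiparametric-QP/LICQ argument — this is the only place your off-the-shelf tool legitimately applies, because only the right-hand side moves; (c) the remaining gap, namely the presence of the $\lambda_\alpha\bar\varepsilon^{\beta_\alpha}\lVert\alpha\rVert_2^2$ term in the cost, is closed by eliminating $\alpha$ through the pseudoinverse parametrization of the solution set of $H\alpha=z$, which turns the regularizer into $\bar\varepsilon^{\beta_\alpha}\lVert H^\dagger z\rVert_2^2$ and yields an $O(\bar\varepsilon^{\beta_\alpha/2})$ input deviation from a cost comparison, not from solution continuity in the Hessian. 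If you want to salvage your route, you would essentially have to reconstruct these intermediate problems and cost-based estimates; as written, the proposal leaves the central analytic content unproved.
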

The proof can be found in the appendix and relies on three steps:
i) Bounding the optimal cost $\hat{J}_L^*(\tilde{\xi}_t)$ of the robust problem~\eqref{eq:DD_MPC_robust} in terms of the optimal cost $J_L^*(\xi_t)$ of the nominal problem~\eqref{eq:DD_MPC};
ii) relating the minimizer of the robust problem~\eqref{eq:DD_MPC_robust} to the minimizer of a perturbed version of the nominal problem~\eqref{eq:DD_MPC}, where the perturbation vanishes for $\bar{\varepsilon}\to0$;
and iii) relating the minimizer of the latter perturbed problem to the unperturbed nominal problem~\eqref{eq:DD_MPC} via sensitivity analysis of quadratic programs.

Theorem~\ref{thm:continuity} shows that the difference between the optimal inputs $\bar{u}^*(t)$ and $\hat{u}^*(t)$ generated by Problem~\eqref{eq:DD_MPC} and Problem~\eqref{eq:DD_MPC_robust}, respectively, is bounded by the noise level.
The result extends~\cite[Proposition 1]{berberich2021linearpart2_extended} with the technical difference that the considered nominal MPC problem~\eqref{eq:DD_MPC} contains no regularization of $\alpha$ in the cost and is, hence, equivalent to a standard model-based MPC.
According to Theorem~\ref{thm:continuity}, measurement noise in DD-MPC is equivalent to an input disturbance for model-based MPC.
Thus, if we can show that Problem~\eqref{eq:DD_MPC} is inherently robust w.r.t.\ input disturbances, then we can conclude robustness of Problem~\eqref{eq:DD_MPC_robust} w.r.t.\ noisy data.
Following this idea, we prove (practical) stability of robust DD-MPC in the next section.

\subsection{Closed-loop stability and robustness}\label{subsec:robust_stab}
Due to the terminal equality constraint~\eqref{eq:DD_MPC_TEC}, a nominal MPC scheme based on Problem~\eqref{eq:DD_MPC} is in general not inherently robust (or only locally).
Therefore, we consider Problem~\eqref{eq:DD_MPC_robust} in a \emph{multi-step} implementation, compare~\cite{gruene2015robustness,worthmann2017interaction}:
At time $t=n i$, $i\in\bbi_{\geq0}$, we measure $\tilde{\xi}_t$, solve Problem~\eqref{eq:DD_MPC_robust}, and apply $u_{[t,t+n-1]}=\hat{u}^*_{[0,n-1]}(t)$ over the next $n$ time steps.

We begin by showing that the nominal data-driven (equivalently, model-based) MPC with terminal equality constraints is inherently robust when applied in a multi-step ($n$-step) fashion.
To this end, we require the following (mild) assumption.
\begin{assumption}\label{ass:robust_2}
We have $0\in\mathrm{int}(\bbu)$ and $L\geq2n$.
\end{assumption}
\begin{proposition}\label{prop:nominal_robustness}
Suppose Assumptions~\ref{ass:pe},~\ref{ass:upper_bound},~\ref{ass:output_constraints}, and~\ref{ass:robust_2} hold.
Consider System~\eqref{eq:sys_LTI} controlled by an $n$-step MPC scheme based on Problem~\eqref{eq:DD_MPC}, where the input applied to~\eqref{eq:sys_LTI} is perturbed as
\begin{align}
u_{[t,t+n-1]}=\bar{u}^*_{[0,n-1]}(t)+d_{[t,t+n-1]}
\end{align}
for $t=ni$, $i\in\bbi_{\geq0}$.

Then, there exists $\bar{d}>0$ such that, for any disturbance $\{d_t\}_{t=0}^{\infty}$ satisfying $\sup_{t\in\bbi_{\geq0}}\lVert d_t\rVert_2\leq\bar{d}$, if Problem~\eqref{eq:DD_MPC} is feasible at initial time $t=0$, then it is feasible at any time $t=ni$, $i\in\bbi_{\geq0}$.

Furthermore, for any $\overline{V}>0$, there exist $\bar{d}_{\max},\bar{c}_\rml,\bar{c}_\rmu>0$, $0<c_\rmV<1$, and $\beta_\rmd\in\calK_{\infty}$ such that for all initial conditions with $V(\xi_0)\leq \overline{V}$, all $\bar{d}\leq\bar{d}_{\max}$, and all times $t=ni$, $i\in\bbi_{\geq0}$, the closed loop satisfies
\begin{align}\label{eq:prop_nominal_robustness_lower_upper_bound}
\bar{c}_\rml\lVert \xi_t\rVert_2^2&\leq V(\xi_t)\leq \bar{c}_\rmu\lVert \xi_t\rVert_2^2,\\
\label{eq:prop_nominal_robustness_decay}
V(\xi_{t+n})&\leq c_\rmV V(\xi_t)+\beta_\rmd(\bar{d}).
\end{align}
\end{proposition}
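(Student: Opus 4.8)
The plan is to exploit that, under Assumption~\ref{ass:pe}, Theorem~\ref{thm:willems} renders Problem~\eqref{eq:DD_MPC} equivalent to a model-based MPC with terminal equality constraints, so the statement reduces to a standard inherent-robustness analysis of the latter in its $n$-step implementation, carried out on the same Lyapunov function $V$ from~\eqref{eq:LF_nominal}. Throughout I fix a minimal realization $(A,B,C,D)$ of~\eqref{eq:sys_LTI}; since $(A,C)$ is observable, the extended state $\xi_t$ in~\eqref{eq:xi_def} determines the associated minimal state $x_t$ uniquely, the prediction initialized via~\eqref{eq:DD_MPC_init} starts from $\bar{x}^*_0(t)=x_t$, and by~\eqref{eq:DD_MPC_TEC} together with observability the predicted minimal state is driven to the origin and kept there over the last $n$ prediction steps, in particular $\bar{x}^*_{L-n}(t)=0$.

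\textbf{Step 1 (recursive feasibility).} Fix $t=ni$ with~\eqref{eq:DD_MPC} feasible and optimal solution $\bar{u}^*(t),\bar{y}^*(t),\alpha^*(t)$. Applying $u_{[t,t+n-1]}=\bar{u}^*_{[0,n-1]}(t)+d_{[t,t+n-1]}$ to~\eqref{eq:sys_LTI} gives $x_{t+n}=\bar{x}^*_n(t)+\delta$ with $\delta=\sum_{j=0}^{n-1}A^{n-1-j}Bd_{t+j}$, hence $\lVert\delta\rVert_2\le c_\delta\bar{d}$ and $\lVert\xi_{t+n}-\bar\xi^*_n(t)\rVert_2\le c_\xi\bar{d}$ for constants $c_\delta,c_\xi$ depending only on $(A,B,C,D,n)$. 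I build a candidate for~\eqref{eq:DD_MPC} at $t+n$ by (a) shifting, $v_k=\bar{u}^*_{k+n}(t)$ for $k\in\bbi_{[0,L-2n-1]}$, which keeps the candidate minimal state equal to $\bar{x}^*_{n+k}(t)+A^k\delta$, hence at $A^{L-2n}\delta$ at step $L-2n$; (b) on $k\in\bbi_{[L-2n,L-n-1]}$ — a window on which the shifted optimal input already vanishes — applying a deadbeat correction steering $A^{L-2n}\delta$ to the origin, which exists with norm $\mathcal{O}(\bar{d})$ since $L\ge2n$ (Assumption~\ref{ass:robust_2}) and $(A,B)$ is controllable; (c) $v_k=0$ for $k\in\bbi_{[L-n,L-1]}$. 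The resulting I/O sequence, appended to the true initial condition $(u_{[t,t+n-1]},y_{[t,t+n-1]})$, is a genuine length-$(L+n)$ trajectory of~\eqref{eq:sys_LTI} and thus admits a representation~\eqref{eq:DD_MPC_hankel} by Theorem~\ref{thm:willems}; it reaches and holds the origin over the last $n$ steps, so~\eqref{eq:DD_MPC_TEC} holds; and the only inputs perturbed relative to $\bar{u}^*$ are the $\mathcal{O}(\bar{d})$ corrections on a window where the reference input is $0\in\mathrm{int}(\bbu)$, so there is a threshold $\bar{d}>0$ depending only on $(A,B,C,D,n)$ and $\bbu$ such that~\eqref{eq:DD_MPC_constraints} holds whenever $\sup_t\lVert d_t\rVert_2\le\bar{d}$. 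Induction over $t=ni$ proves the first claim.

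\textbf{Step 2 (bounds and decay on a sublevel set).} The bound~\eqref{eq:prop_nominal_robustness_lower_upper_bound} is as in Theorem~\ref{thm:nominal}: $V(\xi)\ge\tfrac{\lambda_{\min}(P)}{c_{\mathrm{IOSS}}}\lVert\xi\rVert_2^2=:\bar{c}_\rml\lVert\xi\rVert_2^2$ always, and $V(\xi)\le\big(c_\rmu+\tfrac{\lambda_{\max}(P)}{c_{\mathrm{IOSS}}}\big)\lVert\xi\rVert_2^2=:\bar{c}_\rmu\lVert\xi\rVert_2^2$ wherever~\eqref{eq:DD_MPC} is feasible (Assumption~\ref{ass:upper_bound}), which by Step 1 holds along the closed loop; enlarging $\bar{c}_\rmu$ if needed I may assume $c_{\mathrm{IOSS}}\bar{c}_\rmu>1$. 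Fix $\overline{V}>0$ and assume inductively $V(\xi_t)\le\overline{V}$. Using the candidate of Step 1 and optimality, $J_L^*(\xi_{t+n})\le J_L^*(\xi_t)-\sum_{k=0}^{n-1}\big(\lVert\bar{u}^*_k(t)\rVert_R^2+\lVert\bar{y}^*_k(t)\rVert_Q^2\big)+r(\bar{d})$, where $r(\bar{d})$ collects the cross terms $\bar{y}^*_{n+k}(t)^\top Q\,CA^k\delta$ and the $\mathcal{O}(\bar{d}^2)$ cost of the corrections; since $\bbu$ is compact and $V(\xi_t)\le\overline{V}$ bounds $\lVert\bar{y}^*_k(t)\rVert_2$, one gets $r(\bar{d})=\mathcal{O}(\sqrt{\overline{V}}\,\bar{d}+\bar{d}^2)$. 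Summing the IOSS inequality~\eqref{eq:thm_nominal_proof2} over the $n$ closed-loop steps $t,\dots,t+n-1$ (input $\bar{u}^*_k(t)+d_{t+k}$, true output) and again using compactness of $\bbu$, $V(\xi_t)\le\overline{V}$, and that the disturbance-induced state/output error over $n$ steps is $\mathcal{O}(\bar{d})$, the first-$n$-steps stage cost cancels up to $\mathcal{O}(\sqrt{\overline{V}}\,\bar{d}+\bar{d}^2)$, which yields $V(\xi_{t+n})-V(\xi_t)\le-\tfrac{1}{c_{\mathrm{IOSS}}}\lVert\xi_t\rVert_2^2+\beta_\rmd(\bar{d})$ with $\beta_\rmd(s):=c_1\sqrt{\overline{V}}\,s+c_2s^2\in\calK_{\infty}$. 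Using $\lVert\xi_t\rVert_2^2\ge V(\xi_t)/\bar{c}_\rmu$ gives~\eqref{eq:prop_nominal_robustness_decay} with $c_\rmV:=1-\tfrac{1}{c_{\mathrm{IOSS}}\bar{c}_\rmu}\in(0,1)$; choosing $\bar{d}_{\max}\le\bar{d}$ so small that $\beta_\rmd(\bar{d}_{\max})\le(1-c_\rmV)\overline{V}$ closes the induction $V(\xi_{t+n})\le\overline{V}$ and finishes the proof.

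\textbf{Main obstacle.} I expect the crux to be the recursive-feasibility candidate of Step 1: hitting the terminal equality constraint \emph{exactly} despite the disturbance while staying in $\bbu$. The decisive idea is to route the $\mathcal{O}(\bar{d})$ deadbeat correction through the $n$-step window on which the shifted optimal input already vanishes, so that $0\in\mathrm{int}(\bbu)$ supplies the required margin — this is precisely where Assumption~\ref{ass:robust_2} ($L\ge2n$ and $0\in\mathrm{int}(\bbu)$) enters. The remaining technical care is to keep all $\mathcal{O}(\bar{d})$ remainders uniform in $t$, which is why the decay statement is confined to the sublevel set $\{V\le\overline{V}\}$ and uses compactness of $\bbu$.
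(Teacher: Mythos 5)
Your proposal is correct and follows essentially the same route as the paper's proof: recursive feasibility via shifting the previously optimal input and inserting an $\mathcal{O}(\bar{d})$ deadbeat correction on the window $\bbi_{[L-2n,L-n-1]}$ where the shifted input vanishes (using $L\geq 2n$, controllability, and $0\in\mathrm{int}(\bbu)$), followed by a practical Lyapunov decay obtained from the candidate's cost bound, the IOSS inequality~\eqref{eq:thm_nominal_proof2}, and the quadratic bounds on $V$, with the induction on the sublevel set $\{V\leq\overline{V}\}$ closed by choosing $\bar{d}_{\max}$ small. Your explicit tracking of the disturbance-induced state error $\delta$ and the observation $\bar{x}^*_{L-n}(t)=0$ is just a slightly more detailed rendering of the paper's argument via the output perturbation bound.
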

\begin{proof}
\textbf{(i). Recursive feasibility:}
We prove recursive feasibility via a candidate solution that results from shifting the previously optimal solution and appending a deadbeat controller to account for the input disturbance.
At time $t+n$, we define the input candidate as $\bar{u}_k'(t+n)=\bar{u}_{k+n}^*(t)$ for $k\in\bbi_{[0,L-2n-1]}$.
The initial conditions~\eqref{eq:DD_MPC_init} imply $\bar{u}_k'(t+n)=u_{t+n+k}$, $\bar{y}_k'(t+n)=y_{t+n+k}$ for $\bbi_{[-n,-1]}$.
Thus, it only remains to define $\bar{u}_k'(t+n)$ for $k\in\bbi_{[L-2n,L-1]}$ as well as the output candidate.

We write $\{\check{y}_k(t+n)\}_{k=0}^{L-n-1}$ for the output trajectory that results from applying the input $\{\bar{u}_{k+n}^*(t)\}_{k=0}^{L-n-1}$ to System~\eqref{eq:sys_LTI} with initial state $x_{t+n}$.
For time steps $k\in\bbi_{[0,L-2n-1]}$, we choose the output candidate as $\bar{y}_k'(t+n)=\check{y}_k(t+n)$.
The only difference between $\check{y}_k(t+n)$ and $\bar{y}_{k+n}^*(t)$ is due to the disturbance and thus, by the linear (hence, Lipschitz continuous) system dynamics~\eqref{eq:sys_LTI}, there exists $c_{\rmd,1}>0$ such that
\begin{align}\label{eq:prop_nominal_robustness_proof1}
\lVert\check{y}_k(t+n)-\bar{y}_{k+n}^*(t)\rVert_2\leq c_{\rmd,1}\bar{d}
\end{align}
for $k\in\bbi_{[0,L-n-1]}$.
The terminal equality constraints~\eqref{eq:DD_MPC_TEC} imply $\bar{y}_{k+n}^*(t)=0$ for $k\in\bbi_{[L-2n,L-n-1]}$ and, therefore, using~\eqref{eq:prop_nominal_robustness_proof1}, we obtain
\begin{align*}
\lVert\check{y}_k(t+n)\rVert_2\leq c_{\rmd,1}\bar{d}\quad\text{for}\>\>k\in\bbi_{[L-2n,L-n-1]}.
\end{align*}
Using additionally $\bar{u}_{k+n}^*(t)=0$ for $k\in\bbi_{[L-2n,L-n-1]}$ as well as the linear system dynamics~\eqref{eq:sys_LTI}, the norm of the internal state corresponding to the trajectory $(\bar{u}'(t+n),\check{y}(t+n))$ at time $L-2n$ is bounded by $c_{\rmd,2}\bar{d}$ for some $c_{\rmd,2}>0$.
We now define the input and output candidate over time steps $k\in\bbi_{[L-2n,L-1]}$ using a deadbeat control argument:
By controllability, there exists an input-output trajectory $\{\bar{u}_k'(t+n),\bar{y}_k'(t+n)\}_{k=L-2n}^{L-n-1}$ steering the system to $(\bar{u}_k'(t+n),\bar{y}_k'(t+n))=(0,0)$ for $k=L-n,\dots,L-1$ while satisfying
\begin{align*}
\sum_{k=L-2n}^{L-n-1}\lVert\bar{u}_k'(t+n)\rVert_2^2+\lVert\bar{y}_k'(t+n)\rVert_2^2\leq c_{\rmd,3}\bar{d}
\end{align*}
for some $c_{\rmd,3}>0$.
If $\bar{d}>0$ is sufficiently small, the input satisfies the constraints $\bar{u}_k'(t+n)\in\bbu$ for $k\in\bbi_{[L-2n,L-n-1]}$ due to $0\in\mathrm{int}(\bbu)$.
Moreover, the constructed candidate satisfies the terminal equality constraints~\eqref{eq:DD_MPC_TEC}.
Finally, a corresponding candidate for $\alpha'(t+n)$ satisfying~\eqref{eq:DD_MPC_hankel} exists by Theorem~\ref{thm:willems}.\\
\textbf{(ii). Practical stability:}
The lower and upper bounds in~\eqref{eq:prop_nominal_robustness_lower_upper_bound} are analogous to Theorem~\ref{thm:nominal}.
Using the above candidate solution, it is straightforward to show that
\begin{align*}
J_L^*(\xi_{t+n})-J_L^*(\xi_t)\leq-\sum_{k=0}^{n-1}\left(
\lVert u_{t+k}\rVert_R^2+\lVert y_{t+k}\rVert_Q^2\right)
+\beta_{\rmd}(\bar{d})
\end{align*}
for a linear function $\beta_{\rmd}\in\calK_{\infty}$.
Applying the IOSS property~\eqref{eq:thm_nominal_proof2} as well as the upper bound in~\eqref{eq:prop_nominal_robustness_lower_upper_bound}, we arrive at the decay bound~\eqref{eq:prop_nominal_robustness_decay}.
Using $V(\xi_t)\leq\overline{V}$, this implies
\begin{align}
V(\xi_{t+n})\leq c_\rmV \overline{V}+\beta_\rmd(\bar{d})\leq\overline{V},
\end{align}
where the last inequality holds if $\bar{d}_{\max}$ is sufficiently small.
Hence, Inequality~\eqref{eq:prop_nominal_robustness_decay} holds recursively for all $t=ni$, $i\in\bbi_{\geq0}$, which concludes the proof.
\end{proof}

Proposition~\ref{prop:nominal_robustness} shows that model-based multi-step MPC with terminal equality constraints is robust w.r.t.\ (sufficiently small) input disturbances.
Specifically, the Lyapunov function $V$ used in the nominal stability proof (Theorem~\ref{thm:nominal}) is a \emph{practical Lyapunov function}~\cite{gruene2014asymptotic} in the presence of disturbances.

We now consider the robust DD-MPC scheme based on Problem~\eqref{eq:DD_MPC_robust} which is applied in a multi-step fashion as described above.
The following result combines Theorem~\ref{thm:continuity} and Proposition~\ref{prop:nominal_robustness} to prove practical stability of the resulting closed loop in the presence of noisy output measurements.

\begin{corollary}\label{cor:combination}
Suppose Assumptions~\ref{ass:pe},~\ref{ass:upper_bound}, and~\ref{ass:output_constraints}--\ref{ass:robust_2} hold.
Consider System~\eqref{eq:sys_LTI} controlled by an $n$-step MPC scheme based on Problem~\eqref{eq:DD_MPC_robust}.

Then, for any $\overline{V}>0$, there exist $\bar{\varepsilon}_{\max}$, $c_\rmV>0$, and $\beta_\rmV\in\calK_{\infty}$ such that, for all initial conditions with $V(\xi_0)\leq \overline{V}$, all $\bar{\varepsilon}\leq\bar{\varepsilon}_{\max}$, and all times $t=ni$, $i\in\bbi_{\geq0}$, the closed loop satisfies
\begin{align}\label{eq:cor_combination}
V(\xi_{t+n})\leq c_\rmV V(\xi_t)+\beta_\rmV(\bar{\varepsilon}).
\end{align}
\end{corollary}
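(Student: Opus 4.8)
The plan is to combine the two main ingredients—Theorem~\ref{thm:continuity} (continuity of data-driven optimal control w.r.t.\ noise) and Proposition~\ref{prop:nominal_robustness} (inherent robustness of the $n$-step model-based MPC w.r.t.\ input disturbances)—by identifying the discrepancy between the robust DD-MPC input and the nominal MPC input as the disturbance $d$ in Proposition~\ref{prop:nominal_robustness}. First I would fix $\overline{V}>0$ and argue by induction over the multi-step indices $i\in\bbi_{\geq0}$. The induction hypothesis at time $t=ni$ is that $V(\xi_t)\leq\overline{V}$ (which holds at $t=0$ by assumption). Under this hypothesis, Theorem~\ref{thm:continuity} yields $\beta_\rmu\in\calK_\infty$ with $\lVert\hat{u}^*(t)-\bar{u}^*(t)\rVert_2\leq\beta_\rmu(\bar\varepsilon)$, so applying the robust DD-MPC input $u_{[t,t+n-1]}=\hat{u}^*_{[0,n-1]}(t)$ to System~\eqref{eq:sys_LTI} is exactly the same as applying the nominal MPC input $\bar{u}^*_{[0,n-1]}(t)$ perturbed by a disturbance $d_{[t,t+n-1]}\coloneqq\hat{u}^*_{[0,n-1]}(t)-\bar{u}^*_{[0,n-1]}(t)$ with $\lVert d_k\rVert_2\leq\beta_\rmu(\bar\varepsilon)$ for $k\in\bbi_{[t,t+n-1]}$.

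The next step is to invoke Proposition~\ref{prop:nominal_robustness} with this induced disturbance. I would set $\bar{d}\coloneqq\beta_\rmu(\bar\varepsilon)$ and choose $\bar\varepsilon_{\max}$ small enough that $\beta_\rmu(\bar\varepsilon_{\max})\leq\min\{\bar{d},\bar{d}_{\max}\}$ (where $\bar d,\bar d_{\max}$ are the thresholds from Proposition~\ref{prop:nominal_robustness} associated with the level $\overline{V}$); such a choice exists since $\beta_\rmu\in\calK_\infty$ is continuous with $\beta_\rmu(0)=0$. Then Proposition~\ref{prop:nominal_robustness} gives recursive feasibility of Problem~\eqref{eq:DD_MPC} at $t+n$ (hence also of Problem~\eqref{eq:DD_MPC_robust}, since $\hat\sigma(t+n)=0$ together with the nominal candidate is feasible for the robust problem), the sandwich bound $\bar{c}_\rml\lVert\xi_{t+n}\rVert_2^2\leq V(\xi_{t+n})\leq\bar{c}_\rmu\lVert\xi_{t+n}\rVert_2^2$, and the decay estimate $V(\xi_{t+n})\leq c_\rmV V(\xi_t)+\beta_\rmd(\bar d)=c_\rmV V(\xi_t)+\beta_\rmd(\beta_\rmu(\bar\varepsilon))$. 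Defining $\beta_\rmV\coloneqq\beta_\rmd\circ\beta_\rmu\in\calK_\infty$ (composition of $\calK_\infty$ functions) yields precisely~\eqref{eq:cor_combination}. To close the induction I would note, exactly as at the end of the proof of Proposition~\ref{prop:nominal_robustness}, that $V(\xi_t)\leq\overline{V}$ together with the decay bound implies $V(\xi_{t+n})\leq c_\rmV\overline{V}+\beta_\rmV(\bar\varepsilon)\leq\overline{V}$ provided $\bar\varepsilon_{\max}$ is further shrunk so that $\beta_\rmV(\bar\varepsilon_{\max})\leq(1-c_\rmV)\overline{V}$; this restores the induction hypothesis at $t+n$ and makes all the above estimates valid for every $i\in\bbi_{\geq0}$.

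The main subtlety—and the step I would be most careful about—is the bookkeeping of the quantifier order between $\overline{V}$, $\bar\varepsilon_{\max}$, and the constants. Theorem~\ref{thm:continuity} requires $V(\xi_t)\leq\overline{V}$ to even produce $\beta_\rmu$, and Proposition~\ref{prop:nominal_robustness} requires $V(\xi_0)\leq\overline{V}$ and $\bar d\leq\bar d_{\max}$, with $\bar d_{\max}$ itself depending on $\overline{V}$; so the argument only works because all three invocations are made at the \emph{same} sublevel set $\{V\leq\overline{V}\}$, and the induction is needed to certify that the closed-loop trajectory never leaves this set. A second point worth spelling out explicitly is why feasibility of Problem~\eqref{eq:DD_MPC_robust} follows from feasibility of the nominal Problem~\eqref{eq:DD_MPC}: any nominal-feasible trajectory, extended with $\hat\sigma=0$ and the corresponding $\hat\alpha$ from Theorem~\ref{thm:willems} (using the noisy Hankel matrices, whose column span still contains the relevant trajectory up to the slack absorbed in $\hat\sigma$), is robust-feasible—so recursive feasibility transfers. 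Everything else is routine: the constants $c_\rmV$ and the thresholds carry over verbatim from Proposition~\ref{prop:nominal_robustness}, and $\beta_\rmV$ is just the stated composition.
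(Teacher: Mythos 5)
Your proposal is correct and follows essentially the same route as the paper's proof: interpret $\hat{u}^*(t)-\bar{u}^*(t)$ as the input disturbance via Theorem~\ref{thm:continuity}, invoke Proposition~\ref{prop:nominal_robustness} with $\beta_\rmV=\beta_\rmd\circ\beta_\rmu$, and close the recursion by shrinking $\bar{\varepsilon}_{\max}$ so the sublevel set $\{V\leq\overline{V}\}$ is invariant; your explicit induction bookkeeping is just a more detailed spelling-out of the paper's ``applied recursively'' step. Only a cosmetic quibble: in your side remark on robust feasibility, the candidate for Problem~\eqref{eq:DD_MPC_robust} generally needs a \emph{nonzero} slack $\hat{\sigma}$ (absorbing the noise in $H_{L+n}(\tilde{y}^\rmd)$ and $\tilde{y}_{[t-n,t-1]}$), not $\hat{\sigma}=0$, exactly as constructed in part (i) of the proof of Theorem~\ref{thm:continuity}.
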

\begin{proof}
Given $\overline{V}>0$, consider $\beta_{\rmd}$ and $\bar{d}_{\max}$ from Proposition~\ref{prop:nominal_robustness}.
Choose $\bar{\varepsilon}_{\max}>0$ sufficiently small such that $\beta_{\rmu}(\bar{\varepsilon}_{\max})\leq\bar{d}_{\max}$ with $\beta_{\rmu}$ as in Theorem~\ref{thm:continuity}.
Combining~\eqref{eq:thm_continuity} and~\eqref{eq:prop_nominal_robustness_decay}, we obtain~\eqref{eq:cor_combination} with $\beta_\rmV\coloneqq\beta_\rmd\circ\beta_\rmu$.
As in the proof of Proposition~\ref{prop:nominal_robustness}, we infer $V(\xi_{t+n})\leq \overline{V}$ such that this argument can be applied recursively and~\eqref{eq:cor_combination} holds for all $t=ni$, $i\in\bbi_{\geq0}$.
\end{proof}

Corollary~\ref{cor:combination} proves closed-loop practical exponential stability under the data-driven multi-step MPC scheme based on Problem~\eqref{eq:DD_MPC_robust}.
To be precise,~\eqref{eq:prop_nominal_robustness_lower_upper_bound} together with~\eqref{eq:cor_combination} implies
\begin{align}
\lVert\xi_t\rVert_2^2\leq\frac{\bar{c}_{\rmu}}{\bar{c}_{\rml}}c_{\rmV}^i\lVert\xi_0\rVert_2^2+\frac{1}{\bar{c}_{\rml}}\sum_{j=0}^{i-1}c_{\rmV}^j\beta_\rmV(\bar{\varepsilon})
\end{align}
for any $t=n i$, $i\in\bbi_{\geq0}$.
Using $c_\rmV<1$, this implies that $\xi_t$ converges exponentially to the neighborhood
\begin{align*}
\Xi_{\bar{\varepsilon}}\coloneqq\Big\{\xi\mid\lVert\xi\rVert_2^2\leq\frac{1}{\bar{c}_{\rml}(1-c_{\rmV})}\beta_{\rmV}(\bar{\varepsilon})\Big\}
\end{align*}
of the origin, the size of which increases with the noise level.
Similarly, the size of the region of attraction $\overline{V}$ depends on the noise level $\bar{\varepsilon}$, i.e., for larger values of $\overline{V}$ a smaller value for $\bar{\varepsilon}$ needs to be selected in Corollary~\ref{cor:combination}, compare the proof of Proposition~\ref{prop:nominal_robustness}.
It is also possible to prove that the neighborhood $\Xi_{\bar{\varepsilon}}$ shrinks and $\overline{V}$ can be chosen larger if the minimum singular value of the input data matrix $H_{L+n}(u^\rmd)$ increases, compare~\cite{berberich2021guarantees}.
Hence, the closed-loop guarantees on practical stability of DD-MPC that are proven in Corollary~\ref{cor:combination} depend directly on the data quality.
Analogous to classical inherent robustness results~\cite{yu2014inherent,gruene2017nonlinear,limon2009input,roset2008robustness,messina2005discrete}, our results  only yield \emph{qualitative} guarantees.
In particular, the constants in Corollary~\ref{cor:combination} and the set $\Xi_{\bar{\varepsilon}}$ cannot be easily computed without detailed model knowledge.
Inferring these quantities from data only is an interesting issue for future research.
Note that Corollary~\ref{cor:combination} employs the Lyapunov function $V$ corresponding to the \emph{nominal} data-driven (i.e., model-based) MPC scheme from Section~\ref{sec:nominal} in order to analyze the closed loop of the \emph{robust} DD-MPC scheme.

The presented theoretical analysis provides a generic proof strategy which can be directly transferred to derive robustness guarantees of various DD-MPC schemes, the main requirement being that the corresponding model-based MPC scheme is inherently robust.
More precisely, the continuity property stated in Theorem~\ref{thm:continuity} implies that noisy data can be viewed as an input disturbance for model-based MPC.
The result remains true for different DD-MPC formulations as long as the underlying optimization problem satisfies an LICQ (Assumption~\ref{ass:continuity}), is strongly convex in the input, and the constraints are polytopic.
Furthermore, Proposition~\ref{prop:nominal_robustness} shows that model-based multi-step MPC with terminal equality constraints is inherently robust w.r.t.\ input disturbances.
Analogous results hold for multi-step tracking MPC with terminal equality constraints and an artifical setpoint~\cite[Appendix D]{berberich2021linearpart2_extended} as well as for standard (one-step) implementations of MPC without terminal constraints~\cite[Theorem 7.26]{gruene2017nonlinear}, MPC with general terminal constraints and terminal cost~\cite{yu2014inherent}, and under more general assumptions~\cite{limon2009input,roset2008robustness,messina2005discrete}.
Thus, our two-step analysis implies practical stability of any robust DD-MPC scheme for which Theorem~\ref{thm:continuity} and Proposition~\ref{prop:nominal_robustness} apply.

Stability and robustness of DD-MPC has also been shown recently in~\cite{berberich2021guarantees} and~\cite{bongard2022robust} for MPC schemes with terminal equality constraints and without any terminal constraints, respectively.
Deriving guarantees in the absence of terminal constraints is particularly relevant since most existing applications of DD-MPC~\cite{huang2019power,elokda2021quadcopters} as well as many implementations of model-based MPC~\cite{mayne2013apologia} omit terminal constraints.
The theoretical analysis in the present paper has significant advantages over the ones from~\cite{berberich2021guarantees} and~\cite{bongard2022robust}:
The framework allows for seamless extensions into multiple directions, whereas the analysis in~\cite{berberich2021guarantees} and~\cite{bongard2022robust} is tailored to the specific problem setup and DD-MPC formulation.
Moreover, the overall analysis is substantially shorter.
On the other hand, the tailored approaches from~\cite{berberich2021guarantees,bongard2022robust} yield more insightful and interpretable bounds related to system properties (e.g., controllability and observability), which can even be used to construct a constraint tightening guaranteeing robust output constraint satisfaction~\cite{berberich2020constraints,kloeppelt2022novel}.

To summarize, the presented inherent robustness perspective provides a unifying framework for the robust stability analysis of DD-MPC in the presence of noisy data.

\section{Conclusion}\label{sec:conclusion}
We provided a tutorial introduction to stability and robustness of DD-MPC using an implicit prediction model based on the Fundamental Lemma.
In case of noise-free data, we presented a stability proof based on a detectability condition to address the positive semidefinite cost function.
In the presence of output measurement noise, we then proved that a modified robust DD-MPC scheme is practically stable w.r.t.\ the noise level.
Our analysis consists of two steps:
1) translating noisy data in DD-MPC into an input disturbance for model-based MPC and 2) proving inherent robustness of the latter.
The presented exposition directly applies to a wide class of DD-MPC schemes and, therefore, simplifies the transfer of model-based MPC results to the recent field of DD-MPC.

\bibliographystyle{IEEEtran}   
\bibliography{Literature}

\newpage
\section*{Appendix: Proof of Theorem~\ref{thm:continuity}}
\begin{proof}
Parts of the following proof are adapted from the proof of~\cite[Proposition 1]{berberich2021linearpart2_extended}.
Parts (i), (ii).a, and (ii).b are similar to~\cite[Proposition 1]{berberich2021linearpart2_extended}.
On the other hand, Part (ii).c addresses the issue that, in contrast to~\cite[Proposition 1]{berberich2021linearpart2_extended}, the nominal / model-based MPC problem~\eqref{eq:DD_MPC} does not contain a regularization of $\alpha(t)$ in the cost.\\
\textbf{(i). Proof of cost bound}\\
Since $J_L^*(\xi_t)\leq V(\xi_t)\leq\overline{V}$, Problem~\eqref{eq:DD_MPC} is feasible.
In the following, we use the optimal solution of Problem~\eqref{eq:DD_MPC} to define a candidate solution for Problem~\eqref{eq:DD_MPC_robust}.
To this end, let
\begin{align}\label{eq:thm_continuity_proof_uy_cand_def}
\hat{u}(t)=\bar{u}^*(t),\>\>\hat{y}(t)=\begin{bmatrix}\tilde{y}_{[t-n,t-1]}\\\bar{y}_{[0,L-1]}^*(t)\end{bmatrix}.
\end{align}
Further, define
\begin{align}\label{eq:Hux}
H_{\rmu\rmx}\coloneqq\begin{bmatrix}H_{L+n}(u^\rmd)\\H_1(x^\rmd_{[0,N-L]})
\end{bmatrix},
\end{align}
where $\{x^\rmd_k\}_{k=0}^{N-1}$ is the state trajectory corresponding to $(u^\rmd,y^\rmd)$.
Using Assumption~\ref{ass:pe} and~\cite[Corollary 2]{willems2005note}, $H_{\rmu\rmx}$ has full row rank.
We now choose
\begin{align}\label{eq:thm_continuity_proof_alpha_cand_def}
\hat{\alpha}(t)=H_{\rmu\rmx}^\dagger\begin{bmatrix}\hat{u}(t)\\x_{t-n}\end{bmatrix},
\end{align}
where $H_{\rmu\rmx}^\dagger$ is the Moore-Penrose inverse of $H_{\rmu\rmx}$.
Finally, the slack variable is chosen as
\begin{align}\label{eq:thm_continuity_proof_slack_def}
\hat{\sigma}(t)=&H_{L+n}(\tilde{y}^\rmd)\hat{\alpha}(t)-\hat{y}(t)\\\nonumber
\stackrel{\eqref{eq:thm_continuity_proof_uy_cand_def}-\eqref{eq:thm_continuity_proof_alpha_cand_def}}{=}&H_{L+n}(\varepsilon^\rmd)\hat{\alpha}(t)-\begin{bmatrix}\varepsilon_{[t-n,t-1]}\\0\end{bmatrix}.
\end{align}
This implies
\begin{align}\label{eq:thm_continuity_proof_sigma_bound}
\lVert \hat{\sigma}(t)\rVert_2^2\leq C_1\bar{\varepsilon}^2\lVert\hat{\alpha}(t)\rVert_2^2+C_2\bar{\varepsilon}^2
\end{align}
for some $C_1,C_2>0$.
Exploiting that the above candidate is feasible for Problem~\eqref{eq:DD_MPC_robust}, we infer
\begin{align}\nonumber
&\hat{J}_L^*(\tilde{\xi}_t)-J_L^*(\xi_t)\leq \lambda_{\alpha}\bar{\varepsilon}^{\beta_{\alpha}}\lVert\hat{\alpha}(t)\rVert_2^2
+\frac{\lambda_{\sigma}}{\bar{\varepsilon}^{\beta_{\sigma}}}\lVert\hat{\sigma}(t)\rVert_2^2\\\nonumber
\stackrel{\eqref{eq:thm_continuity_proof_alpha_cand_def},\eqref{eq:thm_continuity_proof_sigma_bound}}{\leq}
&(\lambda_{\alpha}\bar{\varepsilon}^{\beta_{\alpha}}+\lambda_{\sigma}C_1\bar{\varepsilon}^{2-\beta_{\sigma}})\lVert H_{\rmx\rmu}^\dagger\rVert_2^2
(\lVert\hat{u}(t)\rVert_2^2+\lVert x_{t-n}\rVert_2^2)\\\label{eq:thm_continuity_proof_cost_bound1}
&+\lambda_{\sigma}C_2\bar{\varepsilon}^{2-\beta_{\sigma}}.
\end{align}
To bound the first term on the right-hand side, we use that, by~\cite[Equation (16)]{berberich2021guarantees}, there exists $C_3>0$ satisfying
\begin{align}\label{eq:thm_continuity_proof_cost_bound2}
\lVert x_{t-n}\rVert_2^2\leq C_3\lVert\xi_t\rVert_2^2.
\end{align}
Further, by $\lVert\hat{u}(t)\rVert_2^2=\lVert\bar{u}^*(t)\rVert_2^2\leq\frac{1}{\lambda_{\min}(R)}V(\xi_t)$ and $V(\xi)\geq\frac{\lambda_{\min}(P)}{c_{\mathrm{IOSS}}}\lVert\xi\rVert_2^2$, we infer
\begin{align}\label{eq:thm_continuity_proof_cost_bound3}
\lVert\hat{u}(t)\rVert_2^2+\lVert\xi_t\rVert_2^2\leq C_4 V(\xi_t)
\end{align}
for some $C_4>0$.
Combining~\eqref{eq:thm_continuity_proof_cost_bound1}--\eqref{eq:thm_continuity_proof_cost_bound3}, we obtain
\begin{align}\nonumber
\hat{J}_L^*(\tilde{\xi}_t)\leq &J_L^*(\xi_t)+(C_5\bar{\varepsilon}^{\beta_{\alpha}}+C_6\bar{\varepsilon}^{2-\beta_{\sigma}})V(\xi_t)+C_7\bar{\varepsilon}^{2-\beta_{\sigma}}\\\label{eq:thm_continuity_proof_cost_bound_final}
\leq& J_L^*(\xi_t)+\beta_1(\bar{\varepsilon})
\end{align}
for some $C_i>0$, $i\in\bbi_{[5,7]}$, where $\beta_1\in\calK_{\infty}$ due to $V(\xi_t)\leq\overline{V}$ and $2-\beta_{\sigma}>0$ due to Assumption~\ref{ass:continuity}.\newpage
\textbf{(ii). Proof of~\eqref{eq:thm_continuity}}\\
\textbf{(ii).a Bound on $\lVert\hat{u}^*(t)-\tilde{u}(t)\rVert_2$}\\
\begin{subequations}\label{eq:DD_MPC_aux}
Consider now the (auxiliary) optimization problem
\begin{align}\label{eq:DD_MPC_aux_cost}
\underset{\alpha(t),\bar{u}(t),\bar{y}(t)}{\min}&\sum_{k=0}^{L-1}
\lVert\bar{u}_k(t)\rVert_R^2+\lVert\bar{y}_k(t)\rVert_Q^2+\lambda_{\alpha}\bar{\varepsilon}^{\beta_{\alpha}}\lVert\alpha(t)\rVert_2^2\\
\label{eq:DD_MPC_aux_hankel} \text{s.t.}\>\> &\>\begin{bmatrix}
\bar{u}(t)\\\bar{y}(t)+\tilde{\sigma}_1\end{bmatrix}=\begin{bmatrix}H_{L+n}(u^\rmd)\\H_{L+n}(y^\rmd)\end{bmatrix}\alpha(t),\\\label{eq:DD_MPC_aux_init}
&\>\begin{bmatrix}\bar{u}_{[-n,-1]}(t)\\\bar{y}_{[-n,-1]}(t)\end{bmatrix}=\begin{bmatrix}u_{[t-n,t-1]}\\y_{[t-n,t-1]}+\tilde{\sigma}_2\end{bmatrix},\\\label{eq:DD_MPC_aux_constraints}
&\>\bar{u}_k(t)\in\mathbb{U},\>k\in\mathbb{I}_{[0,L]},
\\\label{eq:DD_MPC_aux_TEC}
&\>\bar{u}_k(t)=0,\>\bar{y}_k(t)=0,\>k\in\bbi_{[L-n,L-1]},
\end{align}
\end{subequations}
where
\begin{align}\label{eq:thm_continuity_proof_sigma_tilde_def}
\tilde{\sigma}=\begin{bmatrix}\tilde{\sigma}_1\\\tilde{\sigma}_2\end{bmatrix}\coloneqq\begin{bmatrix}\hat{\sigma}^*(t)-H_{L+n}(\varepsilon^\rmd)\hat{\alpha}^*(t)\\
\varepsilon_{[t-n,t-1]}
\end{bmatrix}.
\end{align}
We denote the optimal solution of Problem~\eqref{eq:DD_MPC_aux} by $\tilde{\alpha}(t)$, $\tilde{u}(t)$, $\tilde{y}(t)$, and the optimal cost by $\tilde{J}_L$.
Since $J_L^*(\xi_t)\leq V(\xi_t)\leq\overline{V}$, Problem~\eqref{eq:DD_MPC} is feasible and thus, by Part (i) of the proof, Problem~\eqref{eq:DD_MPC_robust} is feasible as well.
The optimal solution of Problem~\eqref{eq:DD_MPC_robust} is feasible for Problem~\eqref{eq:DD_MPC_aux}, i.e.,
\begin{align}\label{eq:thm_continuity_proof_cost_bound5}
\tilde{J}_L\leq\hat{J}_L^*(\tilde{\xi}_t)-\frac{\lambda_{\sigma}}{\bar{\varepsilon}^{\beta_{\sigma}}}\lVert\hat{\sigma}^*(t)\rVert_2^2,
\end{align}
where the term involving $\hat{\sigma}^*(t)$ is due to the fact that Problem~\eqref{eq:DD_MPC_aux} does not contain a slack variable in the cost~\eqref{eq:DD_MPC_aux_cost}.
On the other hand, a feasible solution for Problem~\eqref{eq:DD_MPC_robust} can be defined via $\hat{u}(t)=\tilde{u}(t)$, $\hat{y}(t)=\tilde{y}(t)$, $\hat{\alpha}(t)=\tilde{\alpha}(t)$, and 
\begin{align}\label{eq:thm_continuity_proof_cost_bound9}
\hat{\sigma}(t)=\hat{\sigma}^*(t)+H_{L+n}(\varepsilon^\rmd)(\hat{\alpha}(t)-\hat{\alpha}^*(t)),
\end{align}
where we denote the corresponding cost by $\hat{J}_L'$.
By optimality, we have $\hat{J}_L^*(\tilde{\xi}_t)\leq\hat{J}_L'$.
Moreover, by definition, it holds that
\begin{align}\label{eq:thm_continuity_proof_cost_bound8}
\hat{J}_L'-\tilde{J}_L=\frac{\lambda_{\sigma}}{\bar{\varepsilon}^{\beta_{\sigma}}}\lVert\hat{\sigma}(t)\rVert_2^2.
\end{align}
From optimality, it follows that
\begin{align}\label{eq:thm_continuity_proof_cost_bound10}
\lVert\hat{\alpha}^*(t)\rVert_2^2\leq&\frac{\hat{J}_L^*(\tilde{\xi}_t)}{\lambda_{\alpha}\bar{\varepsilon}^{\beta_{\alpha}}},\\\label{eq:thm_continuity_proof_cost_bound11}
\lVert\hat{\sigma}^*(t)\rVert_2^2\leq&\bar{\varepsilon}^{\beta_{\sigma}}\frac{\hat{J}_L^*(\tilde{\xi}_t)}{\lambda_{\sigma}}.
\end{align}
Further, we have
\begin{align}\label{eq:thm_continuity_proof_cost_bound12}
\lVert\hat{\alpha}(t)\rVert_2^2=\lVert\tilde{\alpha}(t)\rVert_2^2\leq\frac{\tilde{J}_L}{\lambda_{\alpha}\bar{\varepsilon}^{\beta_{\alpha}}}
\stackrel{\eqref{eq:thm_continuity_proof_cost_bound5}}{\leq}
\frac{\hat{J}_L^*(\tilde{\xi}_t)}{\lambda_{\alpha}\bar{\varepsilon}^{\beta_{\alpha}}}.
\end{align}
Problem~\eqref{eq:DD_MPC_robust} is strongly convex in $\hat{u}$, i.e., there exists $c_{\rmu,\rmc}>0$ such that
\begin{align}\label{eq:thm_continuity_proof_cost_bound6}
\lVert\hat{u}(t)-\hat{u}^*(t)\rVert_2^2\leq c_{\rmu,\rmc}(\hat{J}_L'-\hat{J}_L^*(\tilde{\xi}_t)),
\end{align}
compare~\cite[Inequality (11)]{koehler2020nonlinear}.
Together with $\hat{u}(t)=\tilde{u}(t)$, we infer
\begin{align*}
\lVert\hat{u}^*(t)-\tilde{u}(t)\rVert_2^2
\stackrel{\eqref{eq:thm_continuity_proof_cost_bound6}}{\leq}
&c_{\rmu,\rmc}(\hat{J}_L'-\hat{J}_L^*(\tilde{\xi}_t))\\
\stackrel{\eqref{eq:thm_continuity_proof_cost_bound5}}{\leq}
&c_{\rmu,\rmc}\left(\hat{J}_L'-\tilde{J}_L-\frac{\lambda_{\sigma}}{\bar{\varepsilon}^{\beta_{\sigma}}}\lVert\hat{\sigma}^*(t)\rVert_2^2\right)\\
\stackrel{\eqref{eq:thm_continuity_proof_cost_bound8}}{=}
&c_{\rmu,\rmc}\frac{\lambda_{\sigma}}{\bar{\varepsilon}^{\beta_{\sigma}}}(\lVert\hat{\sigma}(t)\rVert_2^2-\lVert\hat{\sigma}^*(t)\rVert_2^2).
\end{align*}
Using~\eqref{eq:thm_continuity_proof_cost_bound9} as well as
\begin{align*}
\lVert a\rVert_2^2-\lVert b\rVert_2^2\leq \lVert a-b\rVert_2^2+2\lVert a-b\rVert_2\lVert b\rVert_2,
\end{align*}
we infer
\begin{align*}
&\lVert\hat{\sigma}(t)\rVert_2^2-\lVert\hat{\sigma}^*(t)\rVert_2^2\\
\leq&\lVert H_{L+n}(\varepsilon^\rmd)(\hat{\alpha}(t)-\hat{\alpha}^*(t))\rVert_2^2\\
&+2\lVert H_{L+n}(\varepsilon^\rmd)(\hat{\alpha}(t)-\hat{\alpha}^*(t))\rVert_2
\lVert\hat{\sigma}^*(t)\rVert_2\\
\stackrel{\eqref{eq:thm_continuity_proof_cost_bound10}-\eqref{eq:thm_continuity_proof_cost_bound12}}{\leq}
&C_8\bar{\varepsilon}^{2-\beta_{\alpha}}\hat{J}_L^*(\tilde{\xi}_t)
+C_9\bar{\varepsilon}^{1+\frac{\beta_{\sigma}-\beta_{\alpha}}{2}}\hat{J}_L^*(\tilde{\xi}_t)
\end{align*}
for suitably defined $C_8,C_9>0$.
Thus, using~\eqref{eq:thm_continuity_proof_cost_bound_final} as well as $V(\xi_t)\leq\overline{V}$, we have
\begin{align}\label{eq:thm_continuity_proof_input_bound1}
\lVert\hat{u}^*(t)-\tilde{u}(t)\rVert_2\leq\beta_2(\bar{\varepsilon}),
\end{align}
where
\begin{align*}
&\beta_2(\bar{\varepsilon})\coloneqq\\
&\sqrt{c_{\rmu,\rmc}\lambda_{\sigma}\left(C_8\bar{\varepsilon}^{2-\beta_{\alpha}-\beta_{\sigma}}+C_9\bar{\varepsilon}^{\frac{1}{2}(2-\beta_{\alpha}-\beta_{\sigma})}\right)(\overline{V}+\beta_1(\bar{\varepsilon}))}.
\end{align*}
Note that $\beta_2\in\calK_{\infty}$ due to $\beta_{\alpha}+\beta_{\sigma}<2$.\\
\textbf{(ii).b Bound on $\lVert\tilde{u}(t)-u'(t)\rVert_2$}\\
It remains to derive a bound on $\lVert \tilde{u}(t)-\bar{u}^*(t)\rVert_2$, which, together with~\eqref{eq:thm_continuity_proof_input_bound1}, will imply~\eqref{eq:thm_continuity}.
Feasibility of Problem~\eqref{eq:DD_MPC} implies feasibility of~\eqref{eq:DD_MPC_aux} with $\tilde{\sigma}=0$.
We denote the optimal input of Problem~\eqref{eq:DD_MPC_aux} with $\tilde{\sigma}=0$ by $u'(t)$.
Similar to~\cite[Proposition 1]{berberich2021linearpart2_extended}, we can use standard arguments from multi-parametric quadratic programming~\cite{bemporad2002explicit} to infer
\begin{align}\label{eq:thm_continuity_proof_u_prime_tilde_bound}
&\lVert \tilde{u}(t)-u'(t)\rVert_2\leq C_{10}\lVert\tilde{\sigma}\rVert_2\\\nonumber
\stackrel{\eqref{eq:thm_continuity_proof_sigma_tilde_def},\eqref{eq:thm_continuity_proof_cost_bound10},\eqref{eq:thm_continuity_proof_cost_bound11}}{\leq}
&C_{11}\left(\bar{\varepsilon}+\bar{\varepsilon}^{\frac{\beta_{\sigma}}{2}}+\bar{\varepsilon}^{1-\frac{\beta_{\alpha}}{2}}\right)
\end{align}
for some $C_{10},C_{11}>0$, where the last inequality also uses
\begin{align*}
\hat{J}_L^*(\tilde{\xi}_t)\stackrel{\eqref{eq:thm_continuity_proof_cost_bound_final}}{\leq}
J_L^*(\xi_t)+\beta_1(\bar{\varepsilon})\leq C_{12}
\end{align*}
for some $C_{12}>0$ due to $J_L^*(\xi_t)\leq\overline{V}$ and bounded $\bar{\varepsilon}$.\\
\textbf{(ii).c Bound on $\lVert u'(t)-\bar{u}^*(t)\rVert_2$}\\
\begin{subequations}\label{eq:DD_MPC_proof2}
Finally, we derive a bound on $\lVert u'(t)-\bar{u}^*(t)\rVert_2$.
Recall that $\bar{u}^*(t)$ is the optimal input of Problem~\eqref{eq:DD_MPC}, whereas $u'(t)$ is the optimal input of Problem~\eqref{eq:DD_MPC} when adding the term $\lambda_{\alpha}\bar{\varepsilon}^{\beta_{\alpha}}\lVert\alpha(t)\rVert_2^2$ to the cost.
Since, by assumption, $\bbu$ is a polytope and $\bby=\bbr^p$, Problem~\eqref{eq:DD_MPC} (including the cost term $\lambda_{\alpha}\bar{\varepsilon}^{\beta_{\alpha}}\lVert\alpha(t)\rVert_2^2$) can be transformed to the following generic quadratic program
\begin{align}\label{eq:DD_MPC_proof2_cost}
\underset{z,\alpha}{\min}&\lVert z\rVert_2^2+\bar{\varepsilon}^{\beta_{\alpha}}\lVert\alpha\rVert_2^2\\
\label{eq:DD_MPC_proof2_hankel} \text{s.t.}\>\> &\>
H\alpha=z,\\\label{eq:DD_MPC_proof2_eq}
&\>A_{\mathrm{eq}}z=b_{\mathrm{eq}},\>A_{\mathrm{ineq}}z\leq b_{\mathrm{ineq}}.
\end{align}
\end{subequations}
\begin{subequations}\label{eq:DD_MPC_proof3}
We write $z^*(\bar{\varepsilon})$ for the optimal value $z$ of Problem~\eqref{eq:DD_MPC_proof2} depending on $\bar{\varepsilon}$.
In the following, we show that $\lVert z^*(\bar{\varepsilon})-z^*(0)\rVert_2$ can be bounded in terms of the noise level, which directly implies an analogous bound on $\lVert u'(t)-\bar{u}^*(t)\rVert_2$.
Consider the following quadratic program
\begin{align}\label{eq:DD_MPC_proof3_cost}
\underset{z,\alpha,w}{\min}&\lVert z\rVert_2^2+\bar{\varepsilon}^{\beta_{\alpha}}\lVert\alpha\rVert_2^2\\
\label{eq:DD_MPC_proof3_hankel} \text{s.t.}\>\> &
\alpha=H^\dagger z+(I-H^\dagger H)w,\\\label{eq:DD_MPC_proof3_eq}
&\>A_{\mathrm{eq}}z=b_{\mathrm{eq}},\>A_{\mathrm{ineq}}z\leq b_{\mathrm{ineq}},
\end{align}
\end{subequations}
\begin{subequations}\label{eq:DD_MPC_proof4}
where $H^\dagger$ denotes the Moore-Penrose inverse of $H$.
Problem~\eqref{eq:DD_MPC_proof3} is equivalent to Problem~\eqref{eq:DD_MPC_proof2} since~\eqref{eq:DD_MPC_proof3_hankel} parametrizes the solution space of~\eqref{eq:DD_MPC_proof2_hankel}.
We can eliminate this constraint to arrive at
\begin{align}\label{eq:DD_MPC_proof4_cost}
\underset{z,w}{\min}&\lVert z\rVert_2^2+\bar{\varepsilon}^{\beta_{\alpha}}\lVert H^\dagger z+(I-H^\dagger H)w\rVert_2^2\\\label{eq:DD_MPC_proof4_eq}
\text{s.t.}\>\>&A_{\mathrm{eq}}z=b_{\mathrm{eq}},\>A_{\mathrm{ineq}}z\leq b_{\mathrm{ineq}}.
\end{align}
\end{subequations}
By orthogonality, we infer
\begin{align*}
\lVert H^\dagger z+(I-H^\dagger H)w\rVert_2^2
=\lVert H^\dagger z\rVert_2^2+\lVert(I-H^\dagger H)w\rVert_2^2.
\end{align*}
\begin{subequations}\label{eq:DD_MPC_proof5}
Thus, the optimal solution of Problem~\eqref{eq:DD_MPC_proof4} satisfies $(I-H^\dagger H)w=0$ and, therefore, Problem~\eqref{eq:DD_MPC_proof4} is equivalent to
\begin{align}\label{eq:DD_MPC_proof5_cost}
\underset{z}{\min}&\>z^\top\left(I+\bar{\varepsilon}^{\beta_{\alpha}}(H^{\dagger})^\top H^{\dagger}\right)z\\\label{eq:DD_MPC_proof5_eq}
\text{s.t.}\>\>&A_{\mathrm{eq}}z=b_{\mathrm{eq}},\>A_{\mathrm{ineq}}z\leq b_{\mathrm{ineq}}.
\end{align}
\end{subequations}
Denoting the optimal cost of Problem~\eqref{eq:DD_MPC_proof5} for a given $\bar{\varepsilon}>0$ by $S(\bar{\varepsilon})$, it is not hard to see that
\begin{align}\label{eq:DD_MPC_proof43}
S(\bar{\varepsilon})&\leq (1+C_{13}\bar{\varepsilon}^{\beta_{\alpha}})S(0),\\
S(0)&\leq S(\bar{\varepsilon})
\end{align}
for some $C_{13}>0$ and any $\bar{\varepsilon}>0$.
Using strong convexity of Problem~\eqref{eq:DD_MPC_proof5}, there exists $C_{14}>0$ such that
\begin{align*}
\lVert z^*(\bar{\varepsilon})-z^*(0)\rVert_2^2\leq &C_{14}
(S(\bar{\varepsilon})-S(0))\\
\stackrel{\eqref{eq:DD_MPC_proof43}}{\leq} &C_{14}C_{13}\bar{\varepsilon}^{\beta_{\alpha}}S(0),
\end{align*}
compare~\cite[Inequality (11)]{koehler2020nonlinear}.
Applied to Problem~\eqref{eq:DD_MPC} with $S(0)=J_L^*(\xi_t)\leq V(\xi_t)\leq\overline{V}$, there exists $C_{15}>0$ such that
\begin{align}\label{eq:DD_MPC_proof27}
\lVert u'(t)-\bar{u}^*(t)\rVert_2\leq C_{15}\bar{\varepsilon}^{\frac{\beta_{\alpha}}{2}}.
\end{align}
Combining~\eqref{eq:thm_continuity_proof_input_bound1},~\eqref{eq:thm_continuity_proof_u_prime_tilde_bound}, and~\eqref{eq:DD_MPC_proof27}, there exists $\beta_{\rmu}\in\calK_{\infty}$ satisfying~\eqref{eq:thm_continuity} which concludes the proof.
\end{proof}

\end{document}